\documentclass[preprint,12pt]{elsarticle}

\usepackage[T1]{fontenc}
\usepackage[utf8]{inputenc}
\usepackage{lmodern}
\usepackage{amsmath,amssymb,amsthm,mathtools}
\usepackage{graphicx}
\usepackage{booktabs}
\usepackage{array}
\usepackage{multirow}
\usepackage{siunitx}
\usepackage{xcolor}
\usepackage{microtype}
\usepackage{enumitem}
\usepackage{algorithm}
\usepackage{algpseudocode}
\usepackage{caption}
\usepackage{subcaption}
\usepackage{float}
\usepackage[hidelinks]{hyperref}
\usepackage[nameinlink,noabbrev]{cleveref}

\graphicspath{{./}}
\setlist{nosep,leftmargin=*}

\hypersetup{
  pdftitle={Witnessed Boolean Contracts: Sound Local Distance Guarantees for Approximate Implicit CSG},
  pdfauthor={Jeremy Pei},
  pdfsubject={Certified geometric contracts for approximate signed-distance Boolean operations},
  pdfkeywords={signed distance field, constructive solid geometry, implicit modelling, geometric certification, interval arithmetic, witness point}
}

\theoremstyle{plain}
\newtheorem{theorem}{Theorem}
\newtheorem{lemma}{Lemma}

\newtheorem{corollary}{Corollary}
\theoremstyle{definition}
\newtheorem{definition}{Definition}

\theoremstyle{remark}
\newtheorem{remark}{Remark}

\newcommand{\R}{\mathbb{R}}
\newcommand{\sd}{\operatorname{sd}}
\newcommand{\dist}{\operatorname{dist}}
\newcommand{\reach}{\operatorname{reach}}
\newcommand{\argmin}{\operatorname*{arg\,min}}
\newcommand{\eps}{\varepsilon}
\newcommand{\ph}{\widehat{\phi}}
\newcommand{\nh}{\widehat{n}}
\newcommand{\pp}{p^{\star}}

\newcommand{\Cert}{\mathsf{Cert}}
\newcommand{\Unknown}{\mathsf{Unknown}}

\newcolumntype{P}[1]{>{\raggedright\arraybackslash}p{#1}}

\begin{document}

\begin{frontmatter}

\title{Witnessed Boolean Contracts: Sound Local Distance Guarantees for
Approximate Implicit CSG}

\author{Jeremy Pei\corref{cor1}}
\ead{jypei@nuaa.edu.cn}
\affiliation{organization={Independent Researcher}}
\cortext[cor1]{Corresponding author.}

\begin{abstract}
Hard min/max operations preserve the sign of exact implicit solids but generally
destroy signed-distance semantics. A value-error budget alone cannot repair the
problem: arbitrarily small, sign-preserving perturbations can rotate the
normalized gradient by nearly a right angle and move a gradient-projected
boundary witness by a finite amount. We formalize this obstruction as the
absence of a value-only stability modulus for witness projection.

We then introduce a \emph{witnessed Boolean contract}. At each query, the
contract checks five items: the value error, the gradient error, positive
reach, separation of the active branch, and a Lipschitz visibility margin at
the approximate witness. A sharp normalization lemma turns the gradient-error
bound into a bound on the normal direction. We enlarge the local neighborhood
so that the inactive operand can be checked at the true foot point as well as
at the approximate witness. Under the stated analytic and domain assumptions,
passing these checks proves that the selected branch gives a true point on the
union boundary. It also bounds the distance deficiency by the active
operand's pointwise value error. If any check fails or cannot be evaluated, the
result is \(\Unknown\).
The same argument extends to any finite hard union by checking the selected
witness against every inactive operand.

An executable audit checks the analytic budgets, derivatives, wrapped interval
bounds, branch order, and each step of the proof on \num{75738} certified
samples; it finds no violations. A broader stress test uses spheres, capsules,
ring tori, three Boolean operators, and a four-leaf union. Among \num{96000}
analytic narrow-band samples, \num{87469} are certified with zero oracle
violations; per-case coverage ranges from \SI{72.67}{\percent} to
\SI{98.30}{\percent}. When the visibility check is removed, 75 truly invalid
witnesses are accepted. Thus branch separation alone is not enough. The
non-union rows are labeled as empirical tests, not as new theorems.
\end{abstract}

\begin{keyword}
signed distance field \sep constructive solid geometry \sep implicit modelling
\sep certified geometry \sep witness point \sep interval arithmetic
\end{keyword}

\end{frontmatter}

\section{Introduction}
\label{sec:introduction}

Implicit CAD describes a solid \(X\subset\R^3\) with a scalar field: the sign
tells whether a point is inside or outside.  If the field is the signed
distance \(\phi_X=\sd(\cdot,\partial X)\), its magnitude gives the distance to
the boundary, and its gradient gives a surface normal almost everywhere.
These properties are used in sphere tracing, offsets, narrow-band sampling,
collision queries, and Hermite-data extraction for dual
contouring~\cite{hart1996sphere,frisken2000adaptive,ju2002dual,museth2013vdb}.

The standard hard union
\begin{equation}
  \psi(x)=\min\{\phi_A(x),\phi_B(x)\}
  \label{eq:hard-union}
\end{equation}
gives the correct union sign for exact operands, but \(\psi\) is not generally
the signed distance to \(A\cup B\).  In an overlap, the closest point on the
selected primitive can lie inside the other solid and therefore be hidden from
the union boundary.  This is a familiar source of pseudo-distance behavior in implicit modelling
\cite{bloomenthal1997introduction,pasko1995function,wyvill1999extending}, but
the usual remedies---adding a global error label or re-distancing the field---do
not answer a local question: \emph{is the projected point on the selected
primitive really a point on the Boolean boundary?}

This question shows that value accuracy and geometric stability are different.
If an approximate field \(\ph\) satisfies
\(\|\ph-\phi\|_\infty\le\eps\), its normalized gradient can still be nearly
orthogonal to the true normal.  A witness computed as
\(x-\ph\nabla\ph/\|\nabla\ph\|\) can therefore remain far from the correct point even as
\(\eps\to0\).  Consequently, neither a value budget nor sign preservation can
justify a witness test by itself.

This paper gives a conditional guarantee for this local question.  Instead of
assigning one finite error bound to the whole Boolean field, it checks each
query point separately.  The test uses quantities supplied by an upstream
certified implementation: value-error budgets for the operands, a gradient
error bound for the cell, branch separation, a reach/tube check, and a
Lipschitz bound on an enlarged box.  If any required quantity is unknown, the
test returns \(\Unknown\).

\paragraph{Contributions.}
\begin{enumerate}
  \item We prove that value accuracy and sign preservation admit no stability
        modulus for normalized-gradient witness projection
        (\Cref{thm:value-insufficient}).  The counterexample remains inside a
        positive-reach tube and keeps the field value arbitrarily accurate.
  \item We derive a tight bound for normalizing a perturbed unit vector
        (\Cref{lem:normal}) and combine it with value error to bound the
        approximate witness displacement (\Cref{lem:projection}).
  \item We formulate a witnessed Boolean certificate for consuming upstream
        value, gradient, and range certificates, and prove its soundness
        for binary hard union (\Cref{thm:witness-sound}) and any finite hard
        union (\Cref{cor:nary-union}).  In contrast to a global
        pseudo-SDF label, the theorem identifies the precise local obligations
        that make a finite deficiency bound valid.
  \item We provide a reproducible proof-chain audit.  It checks derivatives,
        wrapped interval arithmetic, branch order, and four analytic links on
        \num{75738} certified samples, while explicitly separating proof,
        implementation validation, coverage, and empirical completeness.
        A second benchmark tests three analytic primitive classes, three
        Boolean operators, a four-leaf union, running time, reasons for
        rejection, and guard ablations.  The four-leaf case tests the n-ary
        corollary.  Intersection and difference are reported only as empirical
        tests.
\end{enumerate}

The result is deliberately narrower than a new distance reconstruction
method.  It does not make the hard minimum a global SDF.  It makes a local
geometric statement checkable and safe to reject.  Composition over mixed
Boolean trees is not claimed here.

\section{Related work}
\label{sec:related}

\paragraph{Distance fields and implicit CSG.}
Jones et al.~\cite{jones2006distance} survey how distance fields are built and
used.  Early function-based CSG expressed union and intersection by combining
scalar functions~\cite{ricci1973constructive}.  R-functions later provided a
systematic algebra that links logic to geometry~\cite{shapiro2007semianalytic}.
Classical implicit modelling combines real functions with Boolean, blending,
and warping operators
\cite{bloomenthal1997introduction,pasko1995function,wyvill1999extending}.
These operators usually preserve the represented solid more easily than they
preserve distance values.  B\'alint et al.~\cite{balint2023operations} analyze operations on
signed-distance estimates and are especially close to our concern with
one-sided guarantees.  Our contribution is complementary: we certify a
nearest-point witness on a local subdomain and include normal-direction error
and branch stability in the contract.

Barbier et al.'s Lipschitz Pruning~\cite{barbier2025pruning} is a particularly
close recent neighbor.  It uses regional Lipschitz reasoning to replace a CSG
subtree by an equivalent active operand, thereby accelerating evaluation of
large primitive trees.  Our regional question and output guarantee are
different: the expression may already have selected the correct operand at
the query while its projected foot point is hidden by another solid.  We bound
the displacement of that foot point under approximate values and normals, then
certify Boolean-boundary visibility and distance deficiency.  The comparison
is summarized in \Cref{tab:related-comparison}.

\begin{table}[H]
\centering
\caption{Relation to selected closest guarantees.  ``Regional'' means that a
claim can vary over space; it does not imply that the certified object is the
same.}
\label{tab:related-comparison}
\small
\begin{tabular}{@{}P{0.17\linewidth}P{0.24\linewidth}P{0.24\linewidth}P{0.26\linewidth}@{}}
\toprule
 & B\'alint et al.~\cite{balint2023operations}
 & Lipschitz Pruning~\cite{barbier2025pruning}
 & This work\\
\midrule
Primary goal
& Precision and sphere-tracing convergence of SDF estimates under set operations
& Faster regional evaluation by pruning equivalent CSG subtrees
& Sound visibility of a projected primitive foot point\\
Operand model
& Signed-distance estimates and lower bounds
& Analytic/conservative 1-Lipschitz primitive trees
& Approximate operands with explicit value and gradient budgets\\
Regional object
& Accuracy described through spatial cases and set-contact smoothness
& Active nodes or an equivalent local expression
& Query witness, inflated transfer box, and branch-separation domain\\
Differential uncertainty
& Geometry-dependent precision analysis
& Lipschitz bounds for expression pruning
& Explicit normalized-gradient direction error and witness displacement\\
Output guarantee
& Precision/convergence bounds for the represented field
& Same local expression value, or a conservative far-field constant
& True Boolean-boundary witness and a deficiency budget, otherwise \(\Unknown\)\\
\bottomrule
\end{tabular}
\end{table}

The same research line constructs exact polygonal signed-distance fields from
cut-locus and Voronoi decompositions~\cite{balint2024cutlocus}.  That work
locates the structures governing multiple foot points and supplies exact
operands.  We address a different case: the operands may be approximate, and
we certify one Boolean witness at a time.  For exact operands, the value- and
gradient-error terms disappear, but the reach and visibility checks are still
needed.

\paragraph{Why prior guarantees do not imply witness soundness.}
Field-precision bounds such as B\'alint et al.'s control field values and
tracing behavior, but they do not locate the foot point obtained from the
normalized gradient.  Lipschitz Pruning proves that two expressions agree in
a region; it does not prove that the selected primitive foot point remains
visible after the Boolean operation.  Closest-point regularizers
and recent consistent SDF interpolation methods repair or reconstruct a field
rather than certify a query on the unmodified approximate Boolean result
\cite{marschner2023constructive,chen2026greed}.  Likewise, gradient-based SDF
contouring constructs and filters candidate contact points but does not
propagate operand value and gradient uncertainty through a Boolean visibility
test~\cite{kohlbrenner2026contouring}.  None of these conclusions supplies the
implication proved here: acceptance entails a true Boolean-boundary witness
and a pointwise deficiency budget.

\paragraph{Certified evaluation of implicit fields.}
Sphere tracing and guaranteed ray intersection rely on conservative field or
Lipschitz information rather than on a representation label alone
\cite{kalra1989guaranteed,hart1996sphere}.  Segment Tracing and Generalized
Lipschitz Tracing use local or directional Lipschitz upper bounds to choose
safe ray steps~\cite{galin2020segment,ban2025generalized}; Lipschitz Pruning
uses such bounds to establish regional expression equivalence.  In contrast,
our transfer lemma uses \(L_b\) to move an inactive-operand value guarantee
across the inflated box enclosing witness uncertainty.  Interval analysis
provides a
standard way to enclose function ranges and derivatives over cells
\cite{moore1966interval,snyder1992interval}; Duff's interval subdivision
applies the same validated reasoning directly to implicit CSG
\cite{duff1992interval}.  Range analysis has also enabled
guaranteed queries on neural implicit surfaces~\cite{sharp2022spelunking}.
Our cell-wise phase bounds follow this tradition, but the certified predicate
is different: it transfers visibility of a projected primitive foot point
through an approximate Boolean computation.

\paragraph{SDF regularity and learned fields.}
Marschner et al.~\cite{marschner2023constructive} characterize pseudo-SDFs
created by CSG operations on neural signed-distance fields and regularize
closest-point behavior to improve the resulting field.  Their method is a
potential upstream producer of better operands; our contract instead consumes
explicit local budgets and certifies a downstream Boolean-boundary witness.

Recent work has emphasized that an Eikonal residual does not by itself certify
distance correctness and has developed improved regularizers or Lipschitz
architectures~\cite{yang2023steik,wang2025hotspot,coiffier2024lipschitz}.
Those methods aim to construct a better field.  We instead assume that a field
comes with explicit local budgets and ask what downstream Boolean statement
can be proved from them.

\paragraph{Reach and nearest-point regularity.}
The reach of a set controls uniqueness and regularity of nearest-point
projection in a tubular neighborhood~\cite{federer1959curvature}.  Positive
reach is central to our theorem: it makes the true active-operand foot point
well defined.  Sharp features have zero reach and are outside the theorem's
scope; this is reported as a limitation rather than hidden by sampling.

\section{Preliminaries and problem statement}
\label{sec:preliminaries}

\subsection{Signed distance and Boolean deficiency}

We take \(\phi_X<0\) inside a solid \(X\), \(\phi_X=0\) on
\(\partial X\), and \(\phi_X>0\) outside.  For two solids \(A,B\), their
approximate hard-union field is
\begin{equation}
  \psi(x)=\min\{\ph_A(x),\ph_B(x)\}.
\end{equation}
Let \(a(x)\in\{A,B\}\) be the branch selected by the approximate fields, and
let \(b(x)\) be the other branch.  When the selected branch is unambiguous, we
define the local distance deficiency as
\begin{equation}
  \Delta^-(x)
  =\dist\!\left(x,\partial(A\cup B)\right)-|\psi(x)|.
  \label{eq:deficiency}
\end{equation}
A positive \(\Delta^-\) means that the field under-reports the distance to the
Boolean boundary.

\begin{definition}[Value/sign contract]
An approximate operand \(\ph_i\), \(i\in\{A,B\}\), satisfies a value/sign
contract on \(D_i^{\mathrm{val}}\) if
\begin{equation}
  |\ph_i(y)-\phi_i(y)|\le\eps_i(y),\qquad
  \operatorname{sign}\ph_i(y)=\operatorname{sign}\phi_i(y)
  \quad (y\in D_i^{\mathrm{val}}).
  \label{eq:value-contract}
\end{equation}
The error budget can be a single global value,
\(\eps_i(y)\equiv\eps_i\), or a value certified separately at each point.
\end{definition}

\subsection{Witnesses}

Suppose branch \(a\) is selected at \(x\).  At a point where the true SDF is
differentiable, write \(n_a(x)=\nabla\phi_a(x)\), with \(\|n_a\|=1\).  The
true foot point on the primitive and the approximate gradient witness are
\begin{align}
  \pp(x)&=x-\phi_a(x)n_a(x), \\
  \widehat p(x)&=x-\ph_a(x)\nh_a(x),
  &\nh_a(x)&=\frac{\nabla\ph_a(x)}{\|\nabla\ph_a(x)\|}.
  \label{eq:witnesses}
\end{align}
For exact operands, \(\pp\) is a valid point on the union boundary exactly
when the other solid does not hide it.  With approximate operands, both the
location of \(\widehat p\) and the inactive field value there are uncertain.

\subsection{Three correctness notions}

We use three different terms throughout the paper:
\begin{itemize}
  \item \emph{soundness}: certificate acceptance implies the geometric claim;
  \item \emph{completeness}: every point satisfying the geometric claim is
        accepted;
  \item \emph{coverage}: the fraction of sampled target points accepted.
\end{itemize}
Finding no violations is useful for checking the implementation, but it does
not prove soundness; the analytic implication provides that proof.  Similarly,
\SI{100}{\percent} coverage on a finite sample does not prove completeness on
the full continuous domain.

\section{Value accuracy does not stabilize witnesses}
\label{sec:impossibility}

The first result explains why a witnessed contract must contain differential
information.

\begin{theorem}[No value-only witness stability modulus]
\label{thm:value-insufficient}
Fix \(0<h_0<\rho\).  There is no function
\(m(\eps,\rho)\) with \(m(\eps,\rho)\to0\) as \(\eps\to0\) such that every
sign-preserving approximation satisfying
\(\|\ph-\phi\|_{\infty}\le\eps\) obeys
\begin{equation}
  \|\widehat p(x)-\pp(x)\|\le m(\eps,\rho)
  \label{eq:modulus}
\end{equation}
at all points with \(|\phi(x)|=h_0\) in the reach tube.
\end{theorem}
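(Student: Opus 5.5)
We argue by contradiction, with an explicit family of perturbations. Take the simplest geometry with reach \(\rho\): a sphere \(S\) of radius \(\rho\), or even a half-space, whose reach is infinite and hence at least \(\rho\). Let \(\phi\) be its signed distance. For concreteness, take the half-space \(\phi(x)=x_3\). Fix a query point \(x_0\) with \(\phi(x_0)=h_0\), say \(x_0=(0,0,h_0)\). Then \(x_0\) lies in the reach tube, since \(h_0<\rho\), and \(\pp(x_0)=(0,0,0)\). Suppose some modulus \(m\) satisfying \eqref{eq:modulus} exists. We will build approximations \(\ph_\eps\) with \(\|\ph_\eps-\phi\|_\infty\le\eps\), with the same sign as \(\phi\) everywhere, but with \(\|\widehat p(x_0)-\pp(x_0)\|\ge c>0\) for a constant \(c\) independent of \(\eps\). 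Letting \(\eps\to0\) then contradicts \(m(\eps,\rho)\to0\).

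The construction is a small-amplitude, high-gradient ripple placed near \(x_0\) and kept away from the zero set. Let \(\chi\) be a smooth bump equal to \(1\) near \(x_0\) and supported in the ball \(B(x_0,h_0/2)\), so that \(\phi\ge h_0/2>0\) on its support. Define
\[
  \ph_\eps(x)=\phi(x)+\eps\,\chi(x)\sin\!\bigl(k\,(x_1-\theta)\bigr),
\]
with the phase \(\theta\) chosen so that the sine vanishes and its derivative equals \(1\) at \(x_0\).

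Three properties follow. First, the value error is at most \(\eps\). Second, if \(\eps<h_0/2\), the sign is unchanged, because the perturbation lives where \(\phi\ge h_0/2\). Third, at \(x_0\) we have \(\ph_\eps(x_0)=h_0\) and \(\nabla\ph_\eps(x_0)=e_3+\eps k\,e_1\).

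Now take \(k=1/\eps\), so that \(\nabla\ph_\eps(x_0)=e_3+e_1\). Then \(\nh(x_0)=(e_1+e_3)/\sqrt2\), and
\[
  \widehat p(x_0)=x_0-\frac{h_0}{\sqrt2}(e_1+e_3),
  \qquad
  \|\widehat p(x_0)-\pp(x_0)\|=h_0\sqrt{2-\sqrt2}.
\]
This is a fixed positive constant. Choosing \(k=M/\eps\) with \(M\to\infty\) instead rotates the normal by nearly a right angle, matching the abstract, and drives the displacement toward \(\sqrt2\,h_0\).

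The remaining check is that nothing in the theorem's hypotheses forbids this family. The statement only requires sign preservation and a sup-norm bound. The true \(\phi\), and hence the reach tube and \(\pp\), are unchanged by the perturbation, and \(\ph_\eps\) is smooth with nonvanishing gradient at \(x_0\), so \(\widehat p\) is well defined. I expect the main subtlety to be the interpretation of ``at all points with \(|\phi(x)|=h_0\)''. We only need failure at one such point, and we have that. If a version is wanted that also works for a fixed curved primitive, the same bump can be placed on the sphere of radius \(\rho\) in the tangential direction at a point at distance \(h_0\) from \(S\). The computation is identical to first order, because the true gradient there is the radial unit vector.
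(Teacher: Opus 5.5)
Your proof is correct and takes essentially the same route as the paper. Both use a flat boundary with infinite reach and a small-amplitude, high-frequency sinusoidal ripple in a tangential coordinate, evaluated at a zero of the ripple, so the value stays exact while the gradient gains a tangential component (\(\eps k\) in yours, \(\alpha_k\omega_k h_0\) in the paper's). The differences are cosmetic: the paper gets sign preservation from a positive multiplicative factor \(\phi(1+\alpha_k\sin(\omega_k x_2))\) on a bounded box, whereas your compactly supported additive bump away from the zero set works on all of \(\R^3\); and your choice \(k=1/\eps\) gives a fixed \(45^\circ\) tilt, while the paper's choice drives the displacement to \(\sqrt2\,h_0\).
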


\begin{proof}
Take the planar SDF \(\phi(x)=x_1\) on a bounded box, where
\(\max|\phi|=M\).  Its zero set has infinite reach.  Define
\begin{equation}
  \ph_k(x)=\phi(x)\bigl(1+\alpha_k\sin(\omega_k x_2)\bigr),
  \qquad \alpha_k=\eps_k/M<1,
\end{equation}
where \(\eps_k\downarrow0\) and \(\alpha_k\omega_k h_0\to\infty\).
The multiplier is positive.  Thus the sign is unchanged and
\(\|\ph_k-\phi\|_\infty\le\eps_k\).  At
\(x_0=(h_0,0,0)\), \(\ph_k(x_0)=\phi(x_0)=h_0\), while
\begin{equation}
  \nabla\ph_k(x_0)=e_1+\alpha_k\omega_k h_0e_2.
\end{equation}
Thus \(\nh_k(x_0)\to e_2\), orthogonal to the true normal \(e_1\).  From
\eqref{eq:witnesses},
\begin{equation}
  \|\widehat p_k(x_0)-\pp(x_0)\|
  =h_0\|\nh_k(x_0)-e_1\|\longrightarrow\sqrt{2}\,h_0,
\end{equation}
although \(\eps_k\to0\).  Hence no modulus satisfying
\eqref{eq:modulus} exists.
\end{proof}

\begin{corollary}[Explicit Boolean witness instability]
\label{cor:boolean-instability}
One can choose positive-reach solids \(A\) and \(B\) with \(A\cup B\neq A\) and
sign-preserving approximations with vanishing value error so that the true and
approximate branch choices agree.  Nevertheless, the true foot point lies on
\(\partial(A\cup B)\) while the approximate witness lies inside \(B\).
Therefore a value/sign contract does not by itself validate a Boolean witness.
\end{corollary}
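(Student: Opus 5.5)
The plan is to embed the construction from the proof of \Cref{thm:value-insufficient} into an explicit two-solid configuration. Take \(A=\{x_1\le 0\}\), intersected with a large ball so that it is bounded with positive reach near the region of interest; locally \(\phi_A(x)=x_1\). The query point is \(x_0=(h_0,0,0)\), and the true foot point is \(\pp(x_0)=0\). From the theorem, the approximate witness converges to \(\widehat p_k(x_0)=x_0-h_0\nh_k(x_0)\to(h_0,-h_0,0)\). After adjusting the sign of the oscillation, for instance using \(\sin(-\omega_k x_2)\) or \(\cos\) with a shifted phase, I can instead send it to \((h_0,h_0,0)\) or any other side I need. The key observation is that this limit point lies off the plane at distance roughly \(h_0\) from \(\pp\), and in fact it sits on the outside of \(A\). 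So I choose \(B\) to be a ball of radius \(r<h_0/2\) centred at the limit witness \(q=(h_0,-h_0,0)\). This ball has positive reach, is disjoint from \(A\), so \(A\cup B\neq A\), and it does not contain \(\pp=0\). Hence \(\pp\in\partial A\setminus B\subset\partial(A\cup B)\).

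The steps come in order. First, set \(\ph_{A,k}\) exactly as in \Cref{thm:value-insufficient}; this gives sign preservation and value error \(\le\eps_k\) on a bounded box, and multiplying by a smooth cutoff keeps the multiplier positive. Second, set \(\ph_{B,k}=\phi_B\), which is exact. Third, check the branch choice at \(x_0\). We have \(\phi_A(x_0)=\ph_{A,k}(x_0)=h_0\) and \(\phi_B(x_0)=\|x_0-q\|-r=h_0-r\)—but that is smaller, so \(B\) would be selected. To fix this, move \(B\) farther away along the same direction. Concretely, exploit the fact that \(\widehat p_k\) is at distance \(h_0\) from \(x_0\) only if \(\ph_A(x_0)=h_0\), and instead choose the query at larger distance \(h\) with the witness pushed correspondingly. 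The cleaner fix is to take \(B\) to be a thin slab or ball centred on the limit witness and contained in \(\{x_1>0\}\). Since the witness lies at distance exactly \(h_0\) from \(x_0\), any ball around it of radius \(r\) gives \(\phi_B(x_0)\approx h_0-r<h_0\). Branch agreement for \(A\) is therefore impossible with \(x_0\) at distance \(h_0\) from \(q\).

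The resolution is to allow the witness to land deeper. Using a multiplier that also rescales the value is not allowed, because \(\ph(x_0)\) must stay near \(h_0\). So I instead make \(A\) the selected branch on the inside of the union, using negative values. Take \(x_0=(-h_0,0,0)\) inside \(A\), so that \(\psi(x_0)\approx-h_0\), which is smaller than any \(\phi_B(x_0)\) with \(x_0\notin B\). Then \(\widehat p_k=x_0+h_0\nh_k\to(-h_0,\pm h_0,0)\), which lies inside \(A\) at depth \(h_0\). Now let \(B\) be a ball of radius \(r\) centred at \(q=(-h_0,h_0,0)\), with \(r<h_0\) so that \(B\subset\operatorname{int}A\)... but then \(A\cup B=A\). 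So I instead let \(B\) be a ball centred at \(q\) that pokes out of the plane, with radius \(h_0<r<\sqrt2 h_0\). This gives \(A\cup B\neq A\), while \(0\notin B\) because \(\|q\|=\sqrt2h_0>r\), and \(x_0\notin B\) because \(\|x_0-q\|=h_0<r\) fails. To repair this, choose \(r\) with \(h_0<r\) and \(x_0\in B\) allowed: inside both solids, the approximate and true minima both pick whichever is more negative. We have \(\phi_A(x_0)=-h_0\) and \(\phi_B(x_0)=h_0-r>-h_0\) when \(r<2h_0\), so \(A\) is selected by both the true and approximate fields. Finally, \(\pp=0\) is not in \(B\) since \(r<\sqrt2h_0\), so \(\pp\in\partial(A\cup B)\), and \(\widehat p_k\to q\in\operatorname{int}B\).

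The main obstacle is exactly this geometric bookkeeping. I need to satisfy three constraints at once: branch agreement, the foot point remaining visible, and the witness being hidden in \(B\), all while keeping \(A\cup B\neq A\). Working inside \(A\) with \(h_0<r<\sqrt2h_0\) is the choice I expect to satisfy all of them. The final step is to note that for large \(k\) the witness lies strictly inside \(B\), since \(B\) is open around \(q\) and \(\widehat p_k\to q\). At the same time \(\eps_k\to0\), which shows that the value/sign contract alone cannot validate the witness.
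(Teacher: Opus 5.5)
Your final construction is correct and is essentially the paper's. You use the half-space \(A=\{x_1\le0\}\), the interior query \(x_0=(-h_0,0,0)\), the perturbation of \Cref{thm:value-insufficient}, and an exact ball \(B\) containing the limiting witness. Your particular choice, centre at the limit \(q\) with \(h_0<r<\sqrt2h_0\), is exactly the simpler variant in the remark following the paper's proof. That variant places \(x_0\) inside \(B\), which the statement permits. The paper's proof instead takes \(c=(-h_0,-3h_0,0)\) and \(r=\tfrac52h_0\) so that the query also stays outside \(B\).

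One small repair is needed: do not intersect \(A\) with a large ball, because the resulting edge gives \(\reach(\partial A)=0\) and the statement asks for positive-reach solids. Keep the half-space and control the value error with your compactly supported cutoff on the oscillation, equal to one near \(x_0\), or with the paper's bounded box. You should also drop the abandoned \(x_0=(h_0,0,0)\) attempts from the write-up.
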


\begin{proof}
Let the active solid be the half-space \(A=\{x_1\le0\}\), use the query
\(x_0=(-h_0,0,0)\), and apply the perturbation from
\Cref{thm:value-insufficient} in the \(x_2\) direction.  The true foot point is
\(\pp=(0,0,0)\).  Here
\(\nabla\ph_k(x_0)=e_1-\alpha_k\omega_kh_0e_2\).  Therefore the approximate
witness \(\widehat p_k\) approaches \(q=(-h_0,-h_0,0)\) as
\(\beta_k:=\alpha_k\omega_kh_0\to\infty\); explicitly,
\begin{equation}
  \|\widehat p_k(x_0)-q\|
  =h_0\sqrt{2-2\beta_k/\sqrt{1+\beta_k^2}}\longrightarrow0.
  \label{eq:witness-limit}
\end{equation}

Now choose \(B\) as the closed ball with radius \(r=\tfrac52h_0\) and center
\(c=(-h_0,-3h_0,0)\).  We use its exact SDF, \(\ph_B=\phi_B\).  Its boundary
has positive reach, \(\reach(\partial B)=r\).  The following four facts hold at
the same time.

\emph{(i) The union is non-trivial.}
The rightmost point of \(B\) is
\(c_1+r=\tfrac32h_0>0\), so \(B\not\subset A\) and
\(A\cup B\supsetneq A\).

\emph{(ii) The true foot point is outside \(B\).}
The distance is \(\|\pp-c\|=\sqrt{10}h_0>r\), with slack
\((\sqrt{10}-\tfrac52)h_0\).  Hence
\(\pp\in\partial A\setminus B\subset\partial(A\cup B)\).  Moreover, because
\(A\) contains the entire half-space \(\{x_1\le0\}\), every
\(y\in\partial(A\cup B)\) satisfies \(y_1\ge0\).  Thus
\(\|x_0-y\|\ge h_0\), while equality is attained at \(y=\pp\), and therefore
\begin{equation}
  \dist(x_0,\partial(A\cup B))=h_0=|\psi(x_0)|.
\end{equation}
The query has zero deficiency, so the failure concerns witness validity rather
than the Boolean field value.

\emph{(iii) The approximate witness is inside \(B\).}
Because \(\|q-c\|=2h_0<r\), the limiting witness lies inside the ball with
slack
\(\tfrac12h_0\).  More explicitly,
\begin{equation}
  \frac{\|\widehat p_k-c\|^2}{h_0^2}
  =10-\frac{6\beta_k}{\sqrt{1+\beta_k^2}}<\frac{25}{4}
\end{equation}
whenever \(\beta_k>5/\sqrt{39}\approx0.80064\).  Hence
\(\widehat p_k\in\operatorname{int}B\) for all sufficiently large \(k\).

\emph{(iv) Both branch choices select \(A\).}
Because \(\|x_0-c\|=3h_0\),
\begin{equation}
  \phi_B(x_0)=\ph_B(x_0)=\tfrac12h_0
  >-h_0=\phi_A(x_0)=\ph_A(x_0).
\end{equation}
The query also lies outside \(B\), with slack \(\tfrac12h_0\).

Meanwhile the active value error tends to zero and the inactive field is exact.
Thus all parts of the counterexample hold at the same time.
\end{proof}

\begin{figure}[t]
\centering
\includegraphics[width=\linewidth]{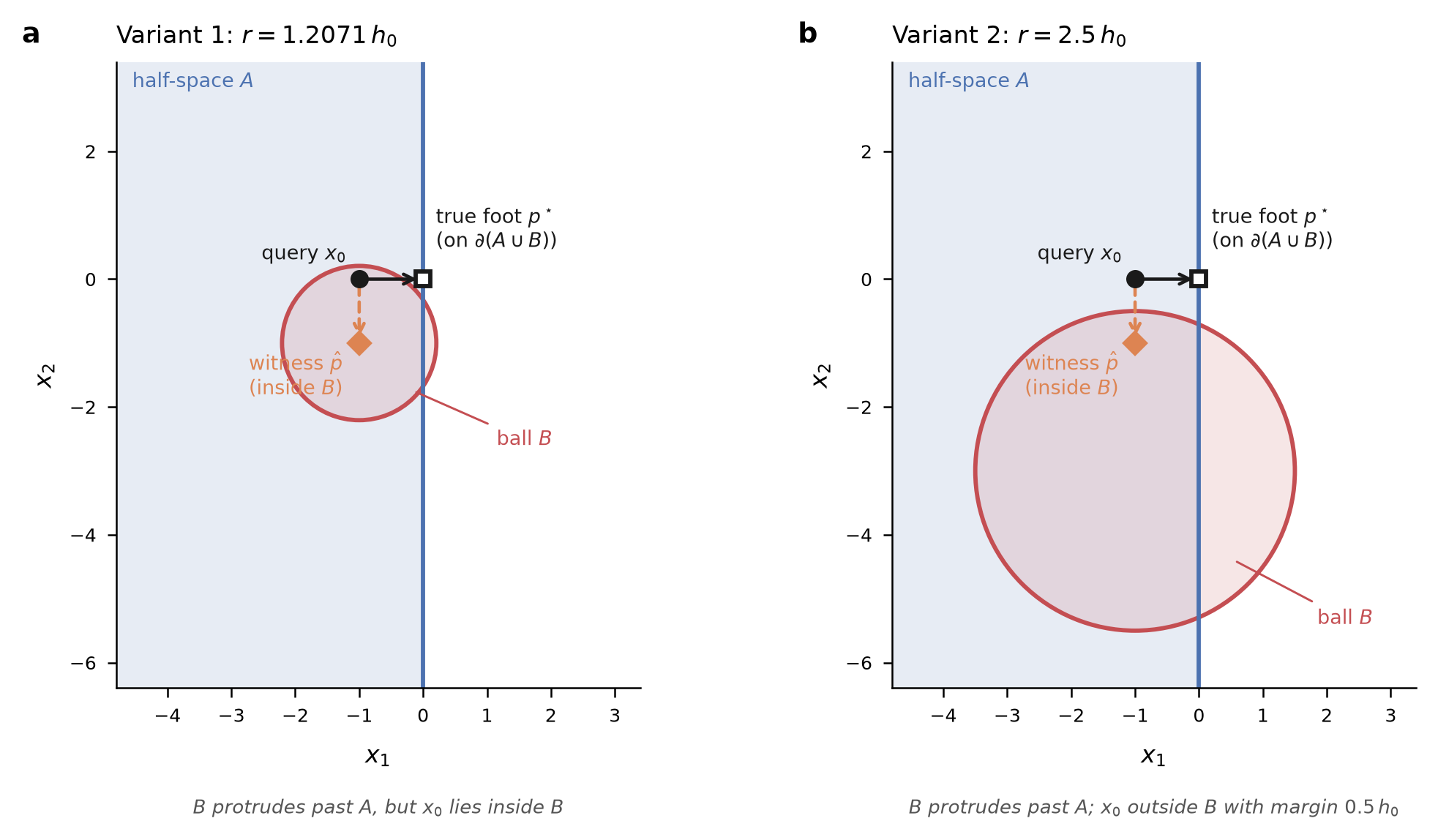}
\caption{Geometry of Corollary~\ref{cor:boolean-instability}, scaled by
\(h_0\).  In both variants the true foot point \(p^\star\) is visible on
\(\partial(A\cup B)\), whereas the perturbed-gradient witness \(\widehat p\)
lies inside \(B\).  The right construction is used in the proof because the
query also remains outside \(B\), with the maximal balanced slack
\(h_0/2\).}
\label{fig:boolean-counterexample}
\end{figure}

\begin{remark}
Keeping the centre at \(q\), the four claims required by the
corollary---non-triviality, visibility of the true foot point, occlusion of the
approximate witness, and branch agreement---hold exactly for
\(r\in(h_0,\sqrt2h_0)\).  The balanced choice is
\(r=\tfrac12(1+\sqrt2)h_0\).  That simpler variant, however, places \(x_0\)
inside \(B\).  Requiring in addition that \(x_0\notin B\) forces the centre
away from \(q\).  If
\(m=\min\{r-\|q-c\|,\,\|x_0-c\|-r\}\) is the smaller of the witness-inside and
query-outside slacks, the triangle inequality gives
\(m\le\tfrac12\|x_0-q\|=\tfrac12h_0\), and the construction above attains this
bound.
\end{remark}

\begin{remark}[Why the certificate rejects the counterexample]
At \(x_0\),
\begin{equation}
  \|\nabla\ph_k(x_0)-\nabla\phi(x_0)\|=\beta_k,
\end{equation}
so A6 already rejects the family once \(\beta_k\ge1\).  Independently, for
all sufficiently large \(k\), \(\widehat p_k\in\operatorname{int}B\), hence
\(\ph_B(\widehat p_k)<0\).  Since \(\eps_B=0\), subtracting any nonnegative
transfer allowance cannot make the visibility expression in
\Cref{sec:certificate} positive.  Thus A6 prevents an uncontrolled witness
displacement, while the visibility guard independently prevents acceptance of
the resulting occluded witness.
\end{remark}

The theorem does not claim that every approximate SDF is unstable.  It only
shows that a value-error budget cannot prove stability by itself.  A sound
certificate must also include normal information, or it must replace the
single witness by a set-valued enclosure.

\section{A witnessed Boolean certificate}
\label{sec:certificate}

We now state sufficient conditions that can be checked at one point.  Assume
\(\ph_A(x)\le\ph_B(x)\), so \(a=A\) and \(b=B\); the other case is symmetric.

\subsection{Assumptions and certificate quantities}

The theorem requires the following conditions.

\begin{description}[style=nextline]
  \item[A1 (positive reach).]
  The active primitive has \(\reach(\partial A)\ge\rho>0\), and its true SDF
  is evaluated in the corresponding tube.

  \item[A2 (value budgets and domain closure).]
  Operand \(i\) has a valid value contract on
  \(D_i^{\mathrm{val}}\), with
  \(|\ph_i-\phi_i|\le\eps_i\).  At the query, the active operand additionally
  has a certified pointwise budget
  \(|\ph_a(x)-\phi_a(x)|\le\delta_a(x)\le\eps_a\).  The query satisfies
  \(x\in D_A^{\mathrm{val}}\cap D_B^{\mathrm{val}}\).  The source cell
  \(C_x\) lies in the validity domain \(D_a^{\nabla}\) of the active gradient
  certificate, and the inflated box introduced below satisfies
  \begin{equation}
    \mathcal{B}_x\subset
    D_b^{\mathrm{val}}\cap D_b^{\mathrm{Lip}}.
    \label{eq:domain-closure}
  \end{equation}

  \item[A3 (sign semantics).]
  Both approximate operands preserve the signs of their referenced solids.
  This is part of the full field contract and supports inside/outside consumers.
  It is \emph{not} a logical premise of the deficiency implication in
  \Cref{thm:witness-sound}, whose proof uses A2 directly.

  \item[A4 (tube stability).]
  With \(h=|\ph_a(x)|\),
  \begin{equation}
    h<\rho,\qquad h+\delta_a(x)<\rho.
    \label{eq:A4}
  \end{equation}

  \item[A5 (certified branch separation).]
  For a fixed \(\gamma>0\),
  \begin{equation}
    \ph_b(x)-\ph_a(x)>\eps_a+\eps_b+\gamma.
    \label{eq:A5}
  \end{equation}

  \item[A6 (legal normalization).]
  A certified gradient error satisfies
  \begin{equation}
    \|\nabla\ph_a(x)-\nabla\phi_a(x)\|\le e(x)<1.
    \label{eq:A6}
  \end{equation}
\end{description}

Define
\begin{equation}
  q(e)=\sqrt{2-2\sqrt{1-e^2}}
  \label{eq:q}
\end{equation}
and the conservative witness-position budget
\begin{equation}
  E_p(x)=\frac{\rho}{\rho-h}\,\delta_a(x)+h q(e(x)).
  \label{eq:Ep}
\end{equation}
\paragraph{Reach-sensitive design margin.}
The factor \(\rho/(\rho-h)>1\) is not required by
\Cref{lem:projection}, which gives the sharper displacement bound
\(\delta_a+hq(e)\).  We retain it as a conservative convention that grows as
the query approaches the reach limit.  Soundness uses only
\(E_p\ge\delta_a+hq(e)\), and all reported certification results use this
convention.
Let \(\mathcal{B}_x\) be a convex box containing the segment
\([\widehat p(x),\pp(x)]\), constructed by inflating the source cell around
\(\widehat p\) by \(E_p\) in every coordinate.  Let \(L_b(x)\) be a certified
Lipschitz constant for \(\ph_b\) on \(\mathcal{B}_x\).  The transferred
visibility margin is
\begin{equation}
  S(x)=\ph_b(\widehat p(x))-\eps_b-L_b(x)E_p(x).
  \label{eq:margin}
\end{equation}
The full field contract also records A3.  The witnessed deficiency certificate
accepts a point only when A1, A2, A4--A6, and \(S(x)>0\) have all been
verified.

\subsection{Tight normal and witness bounds}

\begin{lemma}[Tight normalized-direction bound]
\label{lem:normal}
Let \(\|n\|=1\) and \(\|u-n\|\le e<1\).  Then
\begin{equation}
  \left\|\frac{u}{\|u\|}-n\right\|
  \le q(e)=\sqrt{2-2\sqrt{1-e^2}},
  \label{eq:normal-bound}
\end{equation}
and the bound is attained.
\end{lemma}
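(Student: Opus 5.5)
The plan is to reduce the question to an angle bound. Since \(\|u-n\|\le e<1\) and \(\|n\|=1\), the triangle inequality gives \(\|u\|\ge 1-e>0\), so \(u/\|u\|\) is well defined. Both \(u/\|u\|\) and \(n\) are unit vectors, so
\begin{equation*}
  \left\|\frac{u}{\|u\|}-n\right\|^2=2-2\cos\theta ,
\end{equation*}
where \(\theta\in[0,\pi]\) is the angle between \(u\) and \(n\). The bound \eqref{eq:normal-bound} is therefore equivalent to \(\cos\theta\ge\sqrt{1-e^2}\), that is, \(\sin\theta\le e\) with \(\theta\le\pi/2\).

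To prove this, I will show that the closed ball \(\overline{B}(n,e)\) sees directions only within angle \(\arcsin e\) of \(n\). Write \(u=tn+w\) with \(w\perp n\). The hypothesis becomes
\begin{equation*}
  (t-1)^2+\|w\|^2\le e^2 .
\end{equation*}
This forces \(t\ge 1-e>0\), so \(\theta<\pi/2\). For the angle itself I need \(\|w\|^2\le \tfrac{e^2}{1-e^2}\,t^2\), since then
\begin{equation*}
  \cos^2\theta=\frac{t^2}{t^2+\|w\|^2}\ge 1-e^2 .
\end{equation*}
From the constraint, \(\|w\|^2\le e^2-(t-1)^2\), so it suffices to show
\begin{equation*}
  (1-e^2)\bigl(e^2-(t-1)^2\bigr)\le e^2t^2 .
\end{equation*}
Rearranging, this is equivalent to \((t-(1-e^2))^2\ge 0\), which always holds. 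This algebraic step is the only place that needs care.

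An alternative route is geometric: the tangent from the origin to a ball of radius \(e\) centred at distance \(1\) makes angle \(\arcsin e\) with the centre direction. I will use the algebra above because it is cleaner.

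For attainment, I take the tangency point \(u=(1-e^2)n+e\sqrt{1-e^2}\,m\), where \(m\) is a unit vector orthogonal to \(n\). This is exactly the equality case \(t=1-e^2\). A direct check gives \(\|u-n\|^2=e^4+e^2(1-e^2)=e^2\). Also \(\|u\|=\sqrt{1-e^2}\), so \(u/\|u\|=\sqrt{1-e^2}\,n+e\,m\), and its distance to \(n\) is exactly \(q(e)\). In three dimensions such an \(m\) exists, so the bound is attained. I expect no real obstacle beyond keeping the reduction to \(\cos\theta\) explicit and confirming that \(\theta\le\pi/2\), so that \(\sin\theta\le e\) indeed gives \(\cos\theta\ge\sqrt{1-e^2}\).
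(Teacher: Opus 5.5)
Your proof is correct and follows the same route as the paper: you reduce to the angle \(\theta\) between \(u\) and \(n\), show \(\sin\theta\le e\) with \(\theta<\pi/2\), convert this via \(\|u/\|u\|-n\|^2=2-2\cos\theta\), and attain equality at the tangency point. The only difference is that you prove the tangency bound algebraically, through the identity \(e^2t^2-(1-e^2)\bigl(e^2-(t-1)^2\bigr)=\bigl(t-(1-e^2)\bigr)^2\), which makes rigorous the step the paper only asserts geometrically.
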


\begin{proof}
The vector \(u\) lies in the ball of radius \(e\) centered at \(n\).  The
largest possible angle occurs when the ray through \(u\) is tangent to this
ball, so \(\sin\theta_{\max}=e\).  The two unit directions are then separated
by the chord \(2\sin(\theta_{\max}/2)\), which is exactly
\eqref{eq:normal-bound}.  Equality is reached in this tangent case.
\end{proof}

The commonly obtained triangle-inequality bound \(2e/(1-e)\) is valid but
becomes larger than the trivial diameter two at \(e\ge1/2\).  In contrast,
\eqref{eq:normal-bound} remains informative for every legal \(e<1\).

\begin{lemma}[Witness displacement]
\label{lem:projection}
Under A1, A2, A4, and A6,
\begin{equation}
  \|\widehat p(x)-\pp(x)\|
  \le \delta_a(x)+h q(e(x))\le E_p(x).
  \label{eq:projection-bound}
\end{equation}
\end{lemma}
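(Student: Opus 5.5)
We split the displacement into a value part and a direction part. With \(\widehat p=x-\ph_a\nh_a\) and \(\pp=x-\phi_a n_a\), adding and subtracting \(\ph_a n_a\) gives
\begin{equation*}
  \widehat p(x)-\pp(x)
  =\bigl(\phi_a(x)-\ph_a(x)\bigr)n_a(x)
  +\ph_a(x)\bigl(n_a(x)-\nh_a(x)\bigr).
\end{equation*}
The triangle inequality then yields
\(\|\widehat p-\pp\|\le|\ph_a-\phi_a|\,\|n_a\|+|\ph_a|\,\|\nh_a-n_a\|\).

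The first term is handled by A2 and A1. The pointwise budget in A2 gives \(|\ph_a(x)-\phi_a(x)|\le\delta_a(x)\). We still need \(\|n_a(x)\|=1\), i.e.\ that \(\phi_a\) is differentiable at \(x\) with unit gradient. For this we use A1 together with A4. From \(|\phi_a(x)|\le h+\delta_a(x)<\rho\), the query lies strictly inside the reach tube of \(\partial A\). There the nearest point is unique and the signed distance is \(C^1\) with unit-norm gradient, by Federer's tube result. (On \(\partial A\) itself we would appeal to the \(C^{1,1}\) regularity that positive reach gives.) So the first term is at most \(\delta_a(x)\).

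The second term uses A6 and \Cref{lem:normal}. Set \(u=\nabla\ph_a(x)\) and \(n=n_a(x)=\nabla\phi_a(x)\), which is a unit vector. Then A6 gives \(\|u-n\|\le e(x)<1\). In particular \(\|u\|\ge 1-e>0\), so \(\nh_a\) is well defined. \Cref{lem:normal} now gives \(\|\nh_a-n_a\|\le q(e(x))\). Multiplying by \(|\ph_a(x)|=h\) bounds the second term by \(hq(e(x))\), which proves the first inequality in \eqref{eq:projection-bound}.

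The second inequality compares with \eqref{eq:Ep}. By A4, \(0\le h<\rho\), so \(\rho/(\rho-h)\ge1\). Since \(\delta_a\ge0\), this gives \(\delta_a\le\frac{\rho}{\rho-h}\delta_a\), and hence \(\delta_a+hq(e)\le E_p(x)\).

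The only delicate step is justifying that \(\nabla\phi_a(x)\) exists and is a unit vector, so that \(n_a\) and A6 make sense. We take this from the positive-reach tube property, using A4 to place \(x\) strictly inside the tube. The remaining steps are routine.
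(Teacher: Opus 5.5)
Your proof is correct and follows the paper's own argument. It uses the same splitting $\widehat p-\pp=(\phi_a-\ph_a)n_a+\ph_a(n_a-\nh_a)$. The value term is bounded by $\delta_a$ via A2, the direction term by $hq(e)$ via A6 and \Cref{lem:normal}, and the comparison with $E_p$ comes from the factor $\rho/(\rho-h)$. Two of your details make explicit what the paper compresses into ``A4 ensures \dots the legal reach tube'': you place $x$ in the tube via $|\phi_a(x)|\le h+\delta_a(x)<\rho$, and you note $\|\nabla\ph_a(x)\|\ge 1-e>0$ so that $\nh_a$ is well defined. Your weak inequality $\rho/(\rho-h)\ge1$ is the correct one at $h=0$, where the paper's strict inequality fails harmlessly.
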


\begin{proof}
Using \eqref{eq:witnesses},
\begin{equation}
  \widehat p-\pp
  =\bigl(\phi_a-\ph_a\bigr)n_a
   +\ph_a\bigl(n_a-\nh_a\bigr).
\end{equation}
The first term is bounded by \(\delta_a\); the second by
\(h q(e)\) from \Cref{lem:normal}.  Since \(\rho/(\rho-h)>1\), the operational
budget \eqref{eq:Ep} is a conservative enlargement.  A4 ensures that the true
foot point and the normal used above lie in the legal reach tube.
\end{proof}

\begin{lemma}[Inactive-field transfer]
\label{lem:lipschitz-transfer}
If \(\ph_b\) is \(L_b\)-Lipschitz on a convex set containing
\([\widehat p,\pp]\), then
\begin{equation}
  |\ph_b(\widehat p)-\ph_b(\pp)|
  \le L_b\|\widehat p-\pp\|\le L_b E_p.
  \label{eq:lipschitz-transfer}
\end{equation}
\end{lemma}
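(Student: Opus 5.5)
The plan is the standard Lipschitz argument along a segment, followed by the displacement bound of \Cref{lem:projection}. Write \(K\) for the convex set in the hypothesis; in the intended application \(K=\mathcal{B}_x\), which contains \([\widehat p,\pp]\) by construction. Parametrize the segment as \(\gamma(t)=(1-t)\widehat p+t\,\pp\) for \(t\in[0,1]\). Convexity of \(K\), or directly the hypothesis that \(K\) contains the segment, places every \(\gamma(t)\) and both endpoints in \(K\). Applying the defining inequality of an \(L_b\)-Lipschitz function on \(K\) to the pair \(\widehat p,\pp\) then gives \(|\ph_b(\widehat p)-\ph_b(\pp)|\le L_b\|\widehat p-\pp\|\). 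This settles the first inequality in \eqref{eq:lipschitz-transfer}. No differentiability of \(\ph_b\) is needed, so I would deliberately avoid a mean-value or gradient-integral argument.

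For the second inequality I would invoke \Cref{lem:projection}, which gives \(\|\widehat p-\pp\|\le\delta_a+hq(e)\le E_p\). Multiplying by \(L_b\ge0\) completes the chain. Here I would note that the hypotheses of \Cref{lem:projection} (A1, A2, A4, A6) are assumed in the setting where the transfer is used. I would also note that \(E_p\) is the operational budget \eqref{eq:Ep}, which dominates the sharper bound \(\delta_a+hq(e)\).

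The only point needing care is ensuring that \(\pp\) actually lies in the region where the Lipschitz certificate holds. The true foot point is not computed, so its membership in \(\mathcal{B}_x\) must be deduced. I would argue as follows. \(\mathcal{B}_x\) is obtained by inflating a box around \(\widehat p\) by \(E_p\) in every coordinate, so it contains the closed Euclidean ball of radius \(E_p\) about \(\widehat p\), since the \(\ell^\infty\) ball contains the \(\ell^2\) ball. By \Cref{lem:projection}, \(\pp\) lies in that ball. The box is convex, so the whole segment lies in \(\mathcal{B}_x\subset D_b^{\mathrm{Lip}}\) by \eqref{eq:domain-closure}. This justifies applying the lemma with \(K=\mathcal{B}_x\). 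I expect this containment step, not the inequality itself, to be the substantive check. Everything else is immediate from the definition of a Lipschitz constant.
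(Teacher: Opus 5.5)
Your proposal is correct and follows the paper's argument: apply the Lipschitz inequality to the two endpoints, then use \Cref{lem:projection} to bound \(\|\widehat p-\pp\|\) by \(E_p\). In both versions the \(E_p\)-inflated box \(\mathcal{B}_x\) is what places the segment inside the Lipschitz-valid region; your \(\ell^\infty\)-contains-\(\ell^2\) containment check simply spells out what the paper states in one line.
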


\begin{proof}
Apply the Lipschitz inequality and then \Cref{lem:projection}.  Enlarging the
evaluation box by \(E_p\) makes sure that it contains the required segment.
The original, uninflated source cell may not contain that segment, so it would
not complete the proof.
\end{proof}

\subsection{Soundness theorem}

\begin{theorem}[Witnessed Boolean contract soundness]
\label{thm:witness-sound}
Let \(\psi=\min(\ph_A,\ph_B)\).  If A1, A2, A4--A6 and \(S(x)>0\) hold at
\(x\), then
the approximate active branch is also the true active branch,
\(\pp(x)\in\partial(A\cup B)\), and
\begin{equation}
  \dist\!\left(x,\partial(A\cup B)\right)-|\psi(x)|
  \le\delta_a(x).
  \label{eq:main-bound}
\end{equation}
\end{theorem}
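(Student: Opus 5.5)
We argue in three stages: branch agreement, visibility of the true foot point, and the deficiency bound. Throughout we take \(a=A\), \(b=B\).

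\emph{Branch agreement.} We use A2 at the query \(x\in D_A^{\mathrm{val}}\cap D_B^{\mathrm{val}}\). This gives
\(\phi_B(x)-\phi_A(x)\ge \ph_B(x)-\ph_A(x)-\eps_A-\eps_B\). By A5 the right-hand side exceeds \(\gamma>0\), so the true minimum is also attained strictly by \(A\), and the approximate active branch is the true active branch.

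\emph{Visibility of \(\pp\).} By A4, \(|\phi_A(x)|\le h+\delta_A(x)<\rho\). With A1, this places \(x\) inside the reach tube of \(\partial A\). There the nearest point is unique, \(\phi_A\) is differentiable at \(x\) (for \(x\notin\partial A\); the case \(\phi_A(x)=0\) gives \(\pp=x\) trivially), and \(\pp=x-\phi_A(x)n_A(x)\in\partial A\) with \(\|x-\pp\|=|\phi_A(x)|\).

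Next, \Cref{lem:projection} gives \(\|\widehat p-\pp\|\le E_p\). By construction of \(\mathcal{B}_x\), the box contains the segment \([\widehat p,\pp]\). By \eqref{eq:domain-closure}, \(\ph_B\) is \(L_B\)-Lipschitz there and satisfies the value contract there. Applying \Cref{lem:lipschitz-transfer} and then the value contract at \(\pp\in\mathcal{B}_x\subset D_B^{\mathrm{val}}\), we get
\[
\phi_B(\pp)\ge \ph_B(\pp)-\eps_B\ge \ph_B(\widehat p)-L_BE_p-\eps_B=S(x)>0 .
\]
Hence \(\pp\notin B\), and indeed \(\pp\) lies at positive distance from \(B\). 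Since \(\pp\in\partial A\) and \(\pp\) is outside the closed set \(B\), a small neighbourhood of \(\pp\) meets \(A\cup B\) only through \(A\). So \(\pp\) is a boundary point of \(A\cup B\), i.e.\ \(\pp\in\partial(A\cup B)\).

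\emph{Deficiency bound.} Because \(\pp\in\partial(A\cup B)\), we have \(\dist(x,\partial(A\cup B))\le\|x-\pp\|=|\phi_A(x)|\). Also \(|\psi(x)|=|\ph_A(x)|\), since \(\psi=\ph_A\) at \(x\). Therefore the deficiency is at most \(|\phi_A(x)|-|\ph_A(x)|\le|\phi_A(x)-\ph_A(x)|\le\delta_A(x)\), using the pointwise budget in A2. A3 is not needed anywhere, consistent with its stated role.

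The main obstacle is the visibility step, specifically confirming that every point where a contract is invoked lies in the domain where that contract holds. The true point \(\pp\) is never computed, so the argument must ensure \(\pp\in\mathcal{B}_x\) purely from the displacement bound. We also need the active gradient certificate A6 to apply at \(x\), which it does because \(x\in C_x\subset D_A^{\nabla}\). A further subtlety is that \Cref{lem:projection} implicitly uses \(\nabla\phi_A(x)=n_A(x)\); this holds only off the medial set, which is why the tube condition \(h+\delta_A<\rho\) from A4, rather than merely \(h<\rho\), is needed.
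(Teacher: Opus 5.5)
Your proposal is correct and follows the paper's proof essentially step for step. It gets branch agreement from A2 and A5, then applies \Cref{lem:projection} and \Cref{lem:lipschitz-transfer} on \(\mathcal{B}_x\) and the value contract at \(\pp\) to obtain \(\phi_B(\pp)\ge S(x)>0\), and finally uses \(\dist(x,\partial(A\cup B))\le\|x-\pp\|=|\phi_A(x)|\le|\ph_A(x)|+\delta_A(x)\); your neighbourhood argument for \(\pp\in\partial(A\cup B)\) and your note that \(h+\delta_A(x)<\rho\) places \(x\) in the reach tube only make explicit what the paper states in one line.
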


\begin{proof}
Assume \(\ph_a(x)\le\ph_b(x)\).

\emph{Branch stability.}
By A2 and A5,
\begin{align}
  \phi_b(x)-\phi_a(x)
  &\ge \bigl(\ph_b(x)-\eps_b\bigr)
       -\bigl(\ph_a(x)+\eps_a\bigr)\\
  &>\gamma>0.
  \label{eq:branch-proof}
\end{align}
Thus the approximate and true \(\argmin\) agree.

\emph{Projection and transfer.}
By \Cref{lem:projection},
\(\|\widehat p-\pp\|\le E_p\), so the construction of
\(\mathcal B_x\) places \(\pp\) in that box.  A2's domain-containment clause
makes both the inactive value budget and its Lipschitz certificate valid at
every point used below.  By
\Cref{lem:lipschitz-transfer},
\begin{equation}
  \ph_b(\pp)\ge\ph_b(\widehat p)-L_bE_p.
\end{equation}
Since \(S(x)>0\), the right-hand side is greater than \(\eps_b\).  A2 now gives
\begin{equation}
  \phi_b(\pp)\ge\ph_b(\pp)-\eps_b>0.
  \label{eq:true-visible}
\end{equation}
Therefore the true primitive foot point lies strictly outside the inactive
solid.  Because \(\pp\in\partial A\), it follows that
\(\pp\in\partial(A\cup B)\).

Finally,
\begin{equation}
\begin{split}
  \dist\!\left(x,\partial(A\cup B)\right)
  &\le\|x-\pp\|=|\phi_a(x)|\\
  &\le|\ph_a(x)|+\delta_a(x)=|\psi(x)|+\delta_a(x),
\end{split}
\end{equation}
which is \eqref{eq:main-bound}.
\end{proof}

\begin{remark}[Fail-closed semantics]
The theorem is conditional; it is not a global approximation claim.  If the
gradient bound reaches one, the branch gap is too small, the enlarged box
cannot be bounded, or the margin is nonpositive, the method returns
\(\Unknown\).  Rejecting such a point reduces completeness but does not break
soundness.
\end{remark}

\begin{corollary}[N-ary hard-union certificate]
\label{cor:nary-union}
Let \(m\ge2\), \(\psi_m=\min_{1\le i\le m}\ph_i\), and let \(a\) be the
selected operand at \(x\).  Assume A1, A4, and A6 for the active operand,
and interpret A2 over all \(m\) operands, including their query-point value
budgets and validity domains.
For every \(j\ne a\), assume the generalized branch condition
\begin{equation}
  \ph_j(x)-\ph_a(x)>\eps_a+\eps_j+\gamma,
  \label{eq:nary-branch}
\end{equation}
domain closure of the same inflated box \(\mathcal B_x\), and a valid
Lipschitz constant \(L_j\) there.  Define
\begin{equation}
  S_j(x)=\ph_j(\widehat p(x))-\eps_j-L_jE_p(x).
  \label{eq:nary-margin}
\end{equation}
If \(S_j(x)>0\) for all \(j\ne a\), then the approximate and true hard unions
select \(a\), \(\pp(x)\in\partial(\bigcup_{i=1}^m A_i)\), and
\begin{equation}
 \dist\!\left(x,\partial\!\left(\bigcup_{i=1}^m A_i\right)\right)
 -|\psi_m(x)|\le\delta_a(x).
\end{equation}
\end{corollary}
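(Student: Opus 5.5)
The plan is to repeat the proof of \Cref{thm:witness-sound} once for each inactive operand \(j\ne a\). The n-ary case should then reduce to pairwise checks, with one extra set-theoretic step at the end. Nothing in the binary argument uses the fact that only one inactive solid is present, except at the final step.

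\emph{Branch stability.} Fix \(j\ne a\). The generalized A2 gives \(|\ph_i-\phi_i|\le\eps_i\) at \(x\) for every \(i\). Applying this with \eqref{eq:nary-branch}, exactly as in \eqref{eq:branch-proof}, yields \(\phi_j(x)-\phi_a(x)>\gamma>0\). Since this holds for every \(j\ne a\), the true \(\argmin_i\phi_i(x)\) is unique and equals \(a\). So the true and approximate hard unions both select \(a\).

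\emph{Projection and per-operand visibility.} The witness \(\widehat p\), the foot point \(\pp\), and the budget \(E_p\) depend only on the active operand. Hence \Cref{lem:projection} applies verbatim under A1, A4 and A6, and gives \(\|\widehat p-\pp\|\le E_p\). It follows that \(\mathcal B_x\) contains the segment \([\widehat p,\pp]\). For each \(j\ne a\), domain closure of \(\mathcal B_x\) for operand \(j\) and the Lipschitz constant \(L_j\) let us apply \Cref{lem:lipschitz-transfer}, which gives \(\ph_j(\pp)\ge\ph_j(\widehat p)-L_jE_p>\eps_j\) because \(S_j(x)>0\). The value contract of operand \(j\) on \(\mathcal B_x\) then gives \(\phi_j(\pp)>0\), so \(\pp\notin A_j\).

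\emph{Boundary membership and deficiency.} This is the only step that needs more than the binary case, and I expect it to be the main, though mild, obstacle. We know \(\pp\in\partial A_a\) and \(\pp\notin A_j\) for all \(j\ne a\). Each \(A_j\) is closed, because it is the sublevel set \(\{\phi_j\le0\}\), and there are finitely many of them. Therefore \(\bigcup_{j\ne a}A_j\) is closed, and some open neighbourhood \(U\) of \(\pp\) avoids it. Inside \(U\), the union \(\bigcup_i A_i\) coincides with \(A_a\). Consequently \(\pp\in\partial A_a\) implies \(\pp\in\partial(\bigcup_i A_i)\); finiteness is exactly what makes this work. Finally, as in the theorem,
\[
\dist\!\left(x,\partial\!\left(\textstyle\bigcup_i A_i\right)\right)\le\|x-\pp\|=|\phi_a(x)|\le|\ph_a(x)|+\delta_a(x).
\]
Since \(\psi_m(x)=\ph_a(x)\), this is the claimed bound.
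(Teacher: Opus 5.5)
Your proposal is correct and follows the paper's proof. The generalized branch condition gives branch stability against every inactive operand, Lemmas~\ref{lem:projection} and~\ref{lem:lipschitz-transfer} with \(S_j>0\) give \(\phi_j(\pp)>0\) for all \(j\ne a\), the distance inequality is the final step of Theorem~\ref{thm:witness-sound}, and your neighbourhood argument (finitely many closed \(A_j\), so the union agrees with \(A_a\) near \(\pp\)) simply spells out the boundary-membership step that the paper states in one line.
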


\begin{proof}
Condition~\eqref{eq:nary-branch} applies the branch-stability argument to every
inactive operand.  Lemmas~\ref{lem:projection} and
\ref{lem:lipschitz-transfer}, followed by \(S_j>0\), give
\(\phi_j(\pp)>0\) for every \(j\ne a\).  Thus \(\pp\in\partial A_a\) is
outside all other solids and lies on the boundary of their union.  The final
distance inequality is identical to the last step of
Theorem~\ref{thm:witness-sound}.
\end{proof}

\begin{corollary}[Exact witnessed region]
For exact operands, if the selected primitive foot point is outside the
inactive solid, then the hard-min magnitude equals the true union distance at
that query point.
\end{corollary}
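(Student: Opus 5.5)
The plan is to specialize the argument of \Cref{thm:witness-sound} to exact operands. With \(\ph_i=\phi_i\) the budgets \(\eps_i,\delta_a\) and the gradient error \(e\) all vanish, so \(\widehat p=\pp\) and \(E_p=0\); no Lipschitz transfer is needed. I would state the hypotheses precisely. Let \(a\) be the branch attaining \(\psi(x)=\min\{\phi_A(x),\phi_B(x)\}\). Assume \(\phi_a\) is differentiable at \(x\) with a unique foot point, which is what A1 and A4 give. Interpret ``outside the inactive solid'' as \(\phi_b(\pp)>0\). Then I aim to prove \(|\psi(x)|=\dist(x,\partial(A\cup B))\).

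\emph{Upper bound.} Since \(\pp\in\partial A_a\) and \(\phi_b(\pp)>0\), the point \(\pp\) lies outside \(B\), so every small neighbourhood of it meets the complement of \(A\cup B\). It also meets \(A\subset A\cup B\). Hence \(\pp\in\partial(A\cup B)\), which is exactly the visibility step of \Cref{thm:witness-sound}. Therefore
\[
\dist(x,\partial(A\cup B))\le\|x-\pp\|=|\phi_a(x)|=|\psi(x)|.
\]
This is the deficiency bound with \(\delta_a=0\).

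\emph{Lower bound.} This is the step the approximate theorem never needed, and I expect it to be the main obstacle. I split by the sign of \(\psi(x)\).

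If \(\psi(x)\ge0\), the point \(x\) lies outside both solids. The function \(\min\{\phi_A,\phi_B\}\) equals \(\dist(\cdot,A\cup B)\) outside the union. Every point of \(\partial(A\cup B)\) lies in the closed set \(A\cup B\). Hence \(\dist(x,\partial(A\cup B))\ge\min\{\dist(x,A),\dist(x,B)\}=\psi(x)\).

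If \(\psi(x)<0\), then \(x\in A_a\) and \(|\psi(x)|\) is the distance from \(x\) to \(\partial A_a\). The ball of radius \(|\phi_a(x)|\) about \(x\) lies in \(A_a\subset A\cup B\). So its interior contains no point of \(\partial(A\cup B)\), which gives \(\dist\ge|\psi(x)|\).

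The delicate points are the sign conventions and the closedness of the solids; I would handle these by taking solids closed with \(\partial X=\{\phi_X=0\}\). Combining the two bounds gives the claimed equality.
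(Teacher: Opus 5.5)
Your proof is correct and follows the paper's route. The upper bound comes from the exact foot point being a union-boundary point, which is the visibility step with \(E_p=0\) and \(\delta_a=0\). The reverse inequality is the conservativeness of the hard minimum of exact signed distances; the paper only asserts this, while your sign split proves it (distance to the closed union when \(\psi(x)\ge0\), and the ball of radius \(|\phi_a(x)|\) contained in \(A_a\) when \(\psi(x)<0\)).
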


\begin{proof}
Exactness gives \(E_p=0\) and makes the witnessed foot point a union-boundary
point, hence true distance is at most \(|\psi|\).  The hard minimum of exact
signed distances is conservative for the union, giving the reverse
inequality.
\end{proof}

\section{Certified implementation}
\label{sec:implementation}

\subsection{Cell-wise bounds}

The theorem does not depend on a particular representation.  An implementation
only needs to provide valid values of \(e\) and \(L_b\).  Our verifier uses the
following sign-preserving multiplicative family:
\begin{equation}
  \ph_i(x)=\phi_i(x)\bigl(1+\alpha s_\omega(x)\bigr),
  \quad
  s_\omega(x)=\sin(\omega_1x_1)\cos(\omega_2x_2)\sin(\omega_3x_3).
  \label{eq:perturbation}
\end{equation}
Because \(0<\alpha<1\), the multiplier is positive.  Differentiation gives
\begin{equation}
  \nabla\ph_i-
  \nabla\phi_i
  =\alpha s_\omega\nabla\phi_i
   +\alpha\phi_i\nabla s_\omega.
  \label{eq:gradient-difference}
\end{equation}
For an axis-aligned cell \(C\) of half-width \(r_C\), wrapped trigonometric
intervals provide \(\overline{|s_\omega|}_C\) and
\(\overline{\|\nabla s_\omega\|}_C\).  Since an SDF is one-Lipschitz,
\begin{equation}
 e_C\le\alpha\left(
 \overline{|s_\omega|}_C+
 \bigl(|\phi_i(x_C)|+\sqrt{3}r_C\bigr)
 \overline{\|\nabla s_\omega\|}_C\right).
 \label{eq:cell-gradient}
\end{equation}
Evaluating only the cell center would not be valid when
\(\|\omega\|r_C\gtrsim1\).

On the inflated box \(\mathcal{B}_x\), a Lipschitz product bound gives
\begin{equation}
 L_b\le
 \left(1+\alpha\overline{|s_\omega|}_{\mathcal{B}_x}\right)
 +\alpha\overline{|\phi_b|}_{\mathcal{B}_x}
 \overline{\|\nabla s_\omega\|}_{\mathcal{B}_x}.
 \label{eq:Lb-compute}
\end{equation}
This proof uses the global one-Lipschitz property of signed distance and does
not require differentiability of \(\phi_b\) at every point of the box.

The certifier therefore uses bounds supplied by a trusted upstream method.
For the analytic benchmark below, the known perturbation formula provides
these bounds.  Sampled or neural fields would need another certified source
for \(D_a^{\nabla}\), \(D_b^{\mathrm{Lip}}\), \(e\), and \(L_b\).  Possible
sources include interval automatic differentiation, verified interpolation,
and certified Lipschitz networks.  We do not provide a general method for
producing these bounds.

For \eqref{eq:perturbation}, the local active-field value budget used in
\eqref{eq:Ep} is
\begin{equation}
  \delta_a(x)=\frac{\alpha|\ph_a(x)|}{1-\alpha},
  \label{eq:local-value}
\end{equation}
because \(|\ph_a|\ge(1-\alpha)|\phi_a|\).

\subsection{Algorithm}

\Cref{alg:certifier} lists the steps of the pointwise test.  It uses a strict
comparison whenever the proof needs a strict exterior or branch margin.

\begin{algorithm}[H]
\caption{Pointwise witnessed-union certification}
\label{alg:certifier}
\begin{algorithmic}[1]
\Require \(x,\ph_A,\ph_B,\eps_A,\eps_B,\rho,\gamma\), source cell \(C\), certified domains
\State \(a\gets\argmin_{i\in\{A,B\}}\ph_i(x)\); \(b\gets\{A,B\}\setminus\{a\}\)
\If{\(x\notin D_A^{\rm val}\cap D_B^{\rm val}\) or \(C\not\subset D_a^{\nabla}\)}
  \State \Return \(\Unknown\)
\EndIf
\If{\(\ph_b(x)-\ph_a(x)\le\eps_a+\eps_b+\gamma\)} \Return \(\Unknown\) \EndIf
\State certify \(e_C\) using \eqref{eq:cell-gradient}
\If{\(e_C\ge1\)} \Return \(\Unknown\) \EndIf
\State \(h\gets|\ph_a(x)|\); certify \(\delta_a(x)\)
\If{\(h+\delta_a(x)\ge\rho\)} \Return \(\Unknown\) \EndIf
\State \(\nh_a\gets\nabla\ph_a/\|\nabla\ph_a\|\);
       \(\widehat p\gets x-\ph_a\nh_a\)
\State compute \(E_p\) by \eqref{eq:Ep}; inflate \(C\) around \(\widehat p\) by \(E_p\)
\If{\(\mathcal B_x\not\subset D_b^{\rm val}\cap D_b^{\rm Lip}\)} \Return \(\Unknown\) \EndIf
\State certify \(L_b\) on the inflated box
\If{\(\ph_b(\widehat p)-\eps_b-L_bE_p\le0\)} \Return \(\Unknown\) \EndIf
\State \Return \(\Cert(\Delta^-(x)\le\delta_a(x))\)
\end{algorithmic}
\end{algorithm}

The trigonometric interval routine checks the interval endpoints and every
extremum inside the interval.  It handles negative phases, wrapping, and boxes
wider than one period.  Invalid points are masked before evaluation, so the
routine never receives an infinite box; those points remain uncertified.

For the finite n-ary certificate in \Cref{cor:nary-union}, the checks for the
active operand remain unchanged.  The implementation then applies the branch,
domain, Lipschitz-transfer, and strict visibility checks to each inactive
operand.  It returns \(\Cert\) only if all of these checks pass.

\section{Experiments}
\label{sec:experiments}

The experiments check that the implementation follows the proof; they do not
replace \Cref{thm:witness-sound}.  The verification script stops with an error
if any check fails.  These checks cover the global value budget, finite-
difference derivatives, tightness of the normal bound, interval containment,
branch order, and every link in the proof.

\subsection{Geometry and perturbation}

The first experiment uses two balls of radius five, centered at
\((-3,0,0)\) and \((3,0,0)\), inside \([-9,9]^3\).  Their exact SDFs have
closed-form expressions.  We apply the perturbation
\eqref{eq:perturbation} with
\begin{equation}
  (\omega_1,\omega_2,\omega_3)
  =\omega(1,1.108,0.784),
  \qquad \omega\in\{101,301,1001,3001\}.
\end{equation}
The exact maximum operand magnitude on the box is
\begin{equation}
 \phi_{\max}=\sqrt{12^2+9^2+9^2}-5
 =12.492855685.
\end{equation}
We set \(\alpha=0.05/\phi_{\max}=0.0040022875\), so the global value budget is
exactly \(\eps_A=\eps_B=0.05\).  Other parameters are
\(\rho=5\), \(\gamma=0.3\), a reach slack of \(0.2\), and a gradient-norm
floor of \(0.05\).  The source-cell width is \(0.1/\max_i\omega_i\).

For each frequency, rejection sampling with a separate random seed produces
\num{20000} points that satisfy \(|\psi|\le0.1\) and A5.  We use another
uniform sample of \num{40000} points, with seed 7, to scan the band width.
Coverage uses exact one-sided \SI{95}{\percent} Clopper--Pearson lower bounds.

\subsection{Proof-chain validation}

At every certified point, the audit checks the following four proof links and
the branch choice:
\begin{align*}
 &\text{branch: } \argmin\ph_i=\argmin\phi_i,\\
 &\text{link 1: } \|\widehat p-\pp\|\le E_p,\\
 &\text{link 2: } |\ph_b(\widehat p)-\ph_b(\pp)|
                        \le L_b\|\widehat p-\pp\|,\\
 &\text{link 3: } \ph_b(\pp)\ge\eps_b,\\
 &\text{link 4: } \phi_b(\pp)\ge0.
\end{align*}

\begin{table}[t]
\centering
\caption{Targeted \(|\psi|\le0.1\), A5-separated samples.  ``Viol.'' is the
sum of branch-order and four proof-link violations on certified points.  The
confidence limit concerns coverage, not the analytic soundness theorem.}
\label{tab:proof-chain}
\small
\begin{tabular}{@{}rrrrrr@{}}
\toprule
\(\omega\) & samples & certified & coverage & 95\% lower & viol.\\
\midrule
101  & 20,000 & 20,000 & 100.000\% & 99.9850\% & 0\\
301  & 20,000 & 20,000 & 100.000\% & 99.9850\% & 0\\
1001 & 20,000 & 20,000 & 100.000\% & 99.9850\% & 0\\
3001 & 20,000 & 15,738 &  78.690\% & 78.2081\% & 0\\
\midrule
Total & 80,000 & 75,738 & --- & --- & 0\\
\bottomrule
\end{tabular}
\end{table}

\Cref{tab:proof-chain} shows no measured violations among the \num{75738}
certified points.  At frequency 3001, coverage falls because the method
correctly rejects uncertain points.  Its sampled minimum margin is
\(-10.6406\), whereas the minima for
frequencies 101, 301, and 1001 are respectively \(0.2879\), \(0.2798\), and
\(0.1974\).  A negative margin causes rejection rather than a false
certificate.

Before the proof-chain test, four independent guards pass: the analytic value
budget equals \(0.05\); finite differences agree with the implemented
perturbation gradient to a worst relative error of \(1.90\times10^{-8}\); the
observed-to-analytic ratio for \Cref{lem:normal} is \(0.99993\); and 400 wrapped
interval tests show no containment violation.

\begin{figure}[t]
\centering
\includegraphics[width=\linewidth]{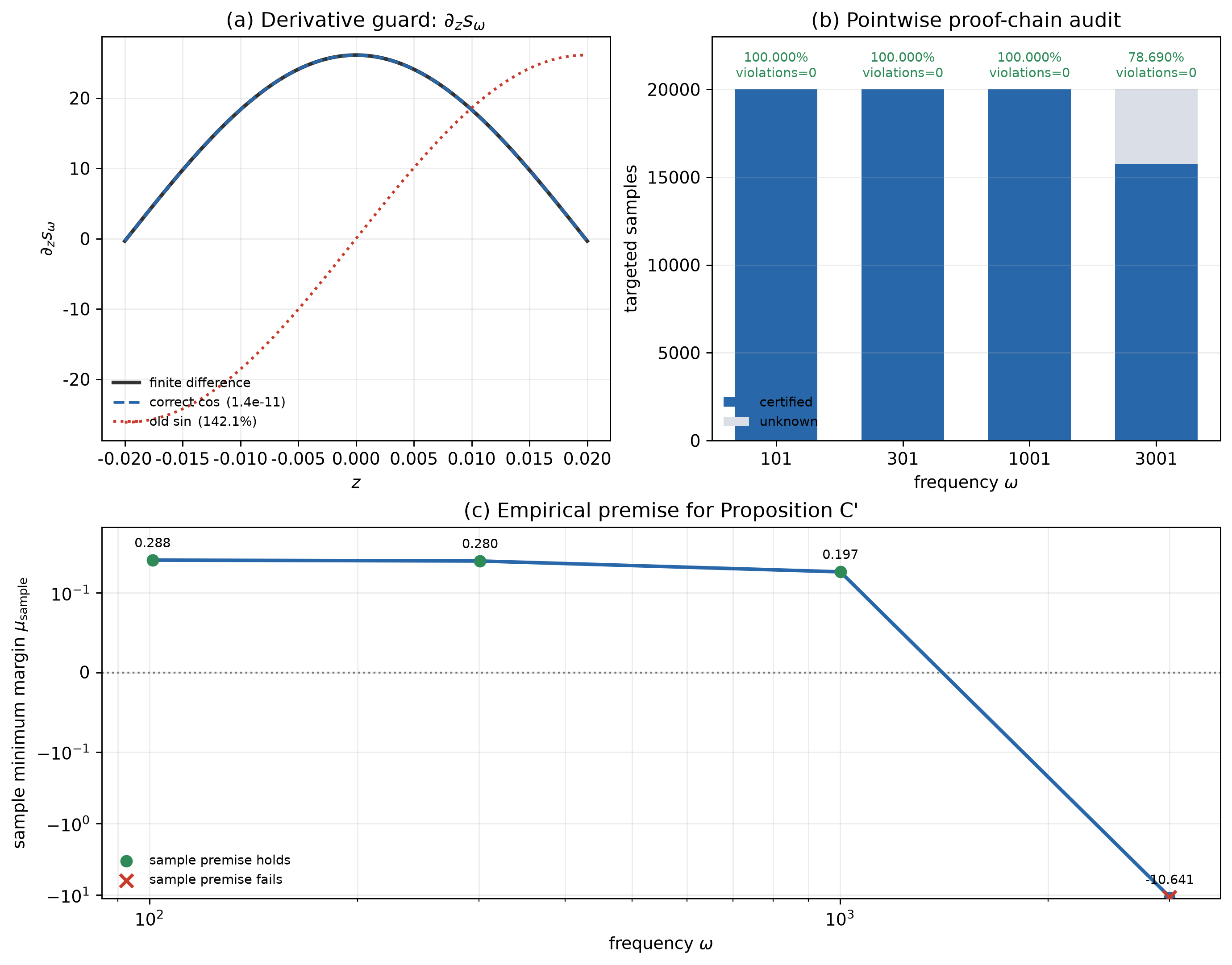}
\caption{Implementation and proof-chain audit.  (a) A finite-difference guard
detects the historical sine/cosine error in the \(z\)-derivative.  (b) Every
certified sample satisfies branch order and all four proof links; high
frequency reduces coverage rather than soundness.  (c) The sampled margin is
positive through \(\omega=1001\) and negative at \(3001\).  Sample minima are
not continuous-domain infima.}
\label{fig:proof-chain}
\end{figure}

\subsection{Band scope and branch separation}

Branch separation is one of the theorem's conditions.  Reporting coverage
only after this filter can make whole-band coverage look too high.  Therefore
\Cref{tab:tau-scan} reports both counts from the same uniform sample.

\begin{table}[t]
\centering
\caption{Band scan at \(\omega=1001\) on \num{40000} uniform samples.  The
A5-separated result is conditional and is never substituted for the whole-band
number.}
\label{tab:tau-scan}
\small
\begin{tabular}{@{}rrrrr@{}}
\toprule
\(\tau\) & whole-band coverage & \(n_{\rm all}\) & A5 coverage & \(n_{\rm A5}\)\\
\midrule
0.05 & 96.16\% & 365   & 100.00\% & 351\\
0.10 & 96.14\% & 725   & 100.00\% & 697\\
0.20 & 93.59\% & 1,404 &  97.55\% & 1,347\\
0.30 & 73.14\% & 2,122 &  76.42\% & 2,031\\
0.50 & 46.74\% & 3,513 &  48.71\% & 3,371\\
\bottomrule
\end{tabular}
\end{table}

At \(\tau=0.1\), 28 of the 725 whole-band samples fail at least one check, so
the entire band is not covered.  All 697 samples that satisfy A5 are
certified.  This shows that branch separation is useful in the sample, but it
does not prove certification of every A5-separated point in the continuous
band.

\subsection{Heterogeneous primitives, Boolean variants, and cost}
\label{sec:expanded-benchmark}

We next test the same contract on exact analytic SDFs for spheres, capsules,
and ring tori.  Five pairings give different curvatures and medial structures.
Each operand is perturbed as in \eqref{eq:perturbation}, with
an operand-specific \(\alpha_i\) chosen from a rigorous boundary-point distance
bound on a contract box one unit wider than the sampling box.  We use
\(\eps_i=0.02\), \(\omega=1001\), \(\gamma=0.08\), and rejection-sample
\num{6000} points per case from \(|\psi|\le0.1\).  We reject a source cell or
enlarged transfer box if it leaves the declared contract domain.  For a
capsule with endpoint \(a\), radius \(r\), and unit axis
\(v\), the bound uses the genuine boundary point
\(q=a+r u_\perp\), where \(u_\perp\perp v\) and \(\|u_\perp\|=1\).

The binary-union rows exercise \Cref{thm:witness-sound}.  For intersection we
replace the minimum and exterior test by a maximum and an interior test; for
difference we use \(\max(\phi_A,-\phi_B)\).  These two sign-dual variants are
checked against exact analytic oracles but are \emph{not} presented as new
theorems.  The final row evaluates a depth-three syntax tree whose four leaves
are combined only by hard union.  Associativity makes it an n-ary minimum, and
the implementation checks the selected witness against every inactive leaf as
required by \Cref{cor:nary-union}.  It does not establish composition for mixed
Boolean trees.

\begin{table}[t]
\centering
\caption{Coverage on heterogeneous analytic narrow bands; \num{6000} samples
per entry.  ``I'' and ``D'' are empirical sign-dual audits.  Every accepted
sample has zero exact-oracle violation.}
\label{tab:expanded-cad}
\small
\begin{tabular}{@{}lrrr@{}}
\toprule
Primitive pairing & Union & Intersection (I) & Difference (D)\\
\midrule
Sphere--sphere   & 98.30\% & 90.65\% & 97.58\%\\
Sphere--capsule  & 98.18\% & 84.75\% & 97.02\%\\
Torus--sphere    & 97.50\% & 88.77\% & 96.92\%\\
Capsule--capsule & 92.80\% & 77.70\% & 89.07\%\\
Torus--capsule   & 92.03\% & 72.67\% & 89.47\%\\
\midrule
Four-leaf nested union & 94.42\% & --- & ---\\
\bottomrule
\end{tabular}
\end{table}

Across all 16 cases, \num{87469} of the \num{96000} band samples are certified,
for a mean coverage of \SI{91.11}{\percent}.  The exact oracle finds no branch,
projection, transfer, or visibility violation among the accepted points.  The
fail-closed checks reject \num{8305} points for insufficient branch separation,
none for the tube/normal condition, and \num{226} for the transferred
visibility margin.  On an Apple M1 Pro, seven warm vectorized NumPy runs for
each \num{6000}-point batch have a median cost of about
\SI{1.5}{\micro\second} per query across the cases; rejection sampling and
file I/O are excluded.  This number measures batch throughput, not the latency
of one scalar query.  The four-leaf union costs
approximately \SI{2.6}{\micro\second} per query.

\begin{figure}[H]
\centering
\includegraphics[width=\linewidth]{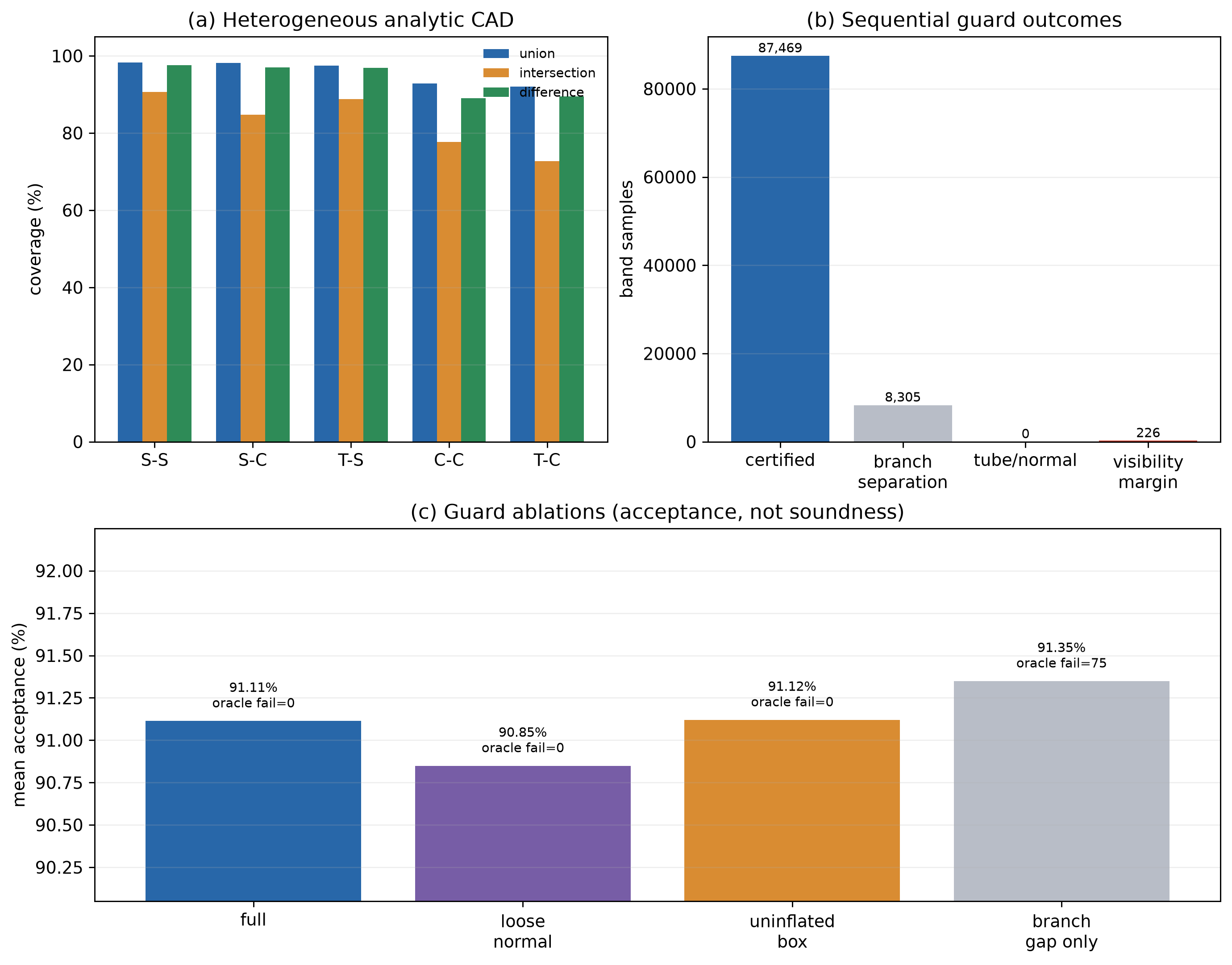}
\caption{Expanded analytic benchmark.  (a) Coverage changes with geometry and
operation.  (b) Most rejected points fail the conservative branch-separation
check.  (c) The tight normal bound raises mean acceptance from 90.85\% to
91.11\%.  Using the original box gives 91.12\% acceptance and no failure in
the sampled oracle, but the box no longer supports the transfer proof.  Using
only the branch gap gives 91.35\% acceptance but admits 75 witnesses that fail
the visibility oracle.  This shows why the witness margin is needed.  All
oracle counts describe only the finite sample.}
\label{fig:expanded-cad}
\end{figure}

\subsection{Conditional narrow-band completeness}

The experiments point to a stronger possibility: there may be a positive
\(\tau^*\) for which every A5-separated point with \(|\psi|\le\tau^*\) is
certified.  We do \emph{not} state this as a theorem.  A proof would require
the following uniform facts on the separated zero level set \(\Gamma\):
\begin{equation}
 \mu_\Gamma=\inf_{\Gamma\cap\mathrm{A5}}S>0,
 \quad \operatorname{Lip}(S)\le K,
 \quad \|\nabla\psi\|\ge m>0,
 \label{eq:Cprime-obligations}
\end{equation}
together with a valid flow or distance argument from the tube to \(\Gamma\).
These would imply \(S(x)\ge\mu_\Gamma-K|\psi(x)|/m\) and hence any
\(\tau<m\mu_\Gamma/K\) would suffice.  The measured
\(\min_iS(x_i)\) is only an order statistic and cannot replace
\(\mu_\Gamma\).  Narrow-band completeness therefore remains a conditional
direction for future work.

\section{Discussion}
\label{sec:discussion}

\paragraph{What the contract recovers.}
A global deficiency bound for a hard Boolean field can be infinite even when
most queries have visible primitive witnesses.  \Cref{thm:witness-sound}
avoids this problem by adding geometric and differential checks and making a
claim only for accepted points.  Its error budget can vary over space and can
be checked by a program.

\paragraph{Why branch gap alone is insufficient.}
A5 guarantees only that the approximate and true \(\argmin\) agree at the
query point.  It does not guarantee that the primitive foot point is visible
on the Boolean boundary.  The transferred margin \(S>0\) checks this second
fact.  Each check therefore protects a different step of the proof.  In the
expanded audit, branch separation alone accepts 75 witnesses that fail the
exact visibility oracle; the full certificate accepts none of them.

\paragraph{Implications for adaptive dual contouring.}
Dual contouring consumes Hermite intersections and normals
\cite{ju2002dual}.  A value-only error budget can certify neither normal
direction nor the validity of the primitive surface that generated a Hermite
sample.  The witnessed contract can serve as a safe filter for such data.
Every accepted sample has a certified point on the Boolean boundary and a
bound on its normal error.  The present paper does not prove that the QEF is
well conditioned or that an adaptive mesh is free of cracks.  Those properties
need additional conditions on cell topology and the linear system.

\paragraph{Proof versus validation.}
The gradient bug found during this study shows why both kinds of checking are
needed.  A wrong derivative once produced no end-to-end violations because the
certificate was conservative.  A finite-difference test and checks of each
proof step exposed the error.  Conversely, even a perfect numerical run cannot
prove a statement for every possible input.  The analytic proof establishes
the general result; the executable audit checks the implementation.

\section{Limitations}
\label{sec:limitations}

First, A1 does not cover sharp primitives or points on the medial axis.  At the
edges and corners of boxes and polyhedra, the reach is zero.  Covering these
cases would require set-valued normals or a projection method that tracks
features.

Second, the paper proves hard union, including the finite n-ary extension in
\Cref{cor:nary-union}.  Intersection and
difference admit analogous witness predicates, but their sign cases and
inactive-boundary conditions should be stated and proved separately.  Deep CSG
trees also require a domain-propagation theorem so that a local certificate is
not silently applied outside its validity set.  The n-ary result therefore does
not establish composition for trees that mix Boolean operators.

Third, \eqref{eq:cell-gradient} and \eqref{eq:Lb-compute} apply specifically to
the multiplicative perturbations used in the benchmark.  Learned or sampled
fields need their own certified derivative and range bounds.  These might come
from interval automatic differentiation, Lipschitz networks, or verified
interpolation.

Fourth, the experiments include spheres, capsules, and ring tori, but all
fields are analytic and use the same type of multiplicative perturbation.  The
results therefore do not establish coverage for sampled grids, neural implicit
fields, sharp CAD parts, or large heterogeneous models.  In those settings,
coverage and cost may change with Boolean depth, curvature, branch separation,
and the quality of the upstream derivative bounds.

Finally, Proposition C' concerns complete certification of a separated band
with positive width.  It remains conditional on the global quantities in
\eqref{eq:Cprime-obligations}.  The positive sample minima for
\(\omega\le1001\) support the proposition, but they are not a lower bound over
the full continuous domain.

\section{Conclusion}
\label{sec:conclusion}

Checks for approximate implicit fields must match the geometric operation that
uses the field.  Even when the sign is exact, a value-error budget does not
stabilize a witness computed from the normalized gradient.  A larger scalar
tolerance is not enough.  The certificate also needs branch separation, a
normal-error bound, positive reach, and a visibility margin that remains valid
over a certified neighborhood.

The witnessed Boolean contract puts these checks in one explicit test.  When
all analytic checks pass and the margin is positive, the selected primitive
provides a real point on the union boundary.  The true-distance deficiency is
then bounded by the active operand's pointwise value-error budget.  The
implementation returns \(\Unknown\) when a check cannot be proved.  Its
reproducibility tests verify each computational step but do not treat sampled
agreement as a proof.  These results provide a basis for spatially varying
contracts in implicit CAD and for later guarantees on normals and QEF
stability in adaptive contouring.

\section*{Reproducibility and data availability}

The accompanying files include the proof-chain verifier, the heterogeneous CAD
benchmark, the JSON results, and the generated figures.  The verification
scripts use NumPy and Matplotlib.  They compute the Clopper--Pearson bound with
the Python standard library, so SciPy is not needed.  Their success message
also states that sampling is not a proof.  All fixed parameters and random
seeds used in the tables are given in
\Cref{sec:experiments}.

\section*{Declaration of generative AI and AI-assisted technologies}

During preparation of this manuscript, the author used OpenAI Codex to help
organize the paper, audit proofs, review code, check reproducibility, and edit
the language.  The author checked the mathematical arguments, references,
source code, experimental results, and generated text, and accepts
responsibility for the published content.



\begin{thebibliography}{99}

\bibitem{hart1996sphere}
J.~C. Hart,
Sphere tracing: a geometric method for the antialiased ray tracing of implicit
surfaces,
\emph{The Visual Computer} 12~(10) (1996) 527--545.
\href{https://doi.org/10.1007/s003710050084}{doi:10.1007/s003710050084}.

\bibitem{frisken2000adaptive}
S.~F. Frisken, R.~N. Perry, A.~P. Rockwood, T.~R. Jones,
Adaptively sampled distance fields: a general representation of shape for
computer graphics,
in: \emph{Proc. SIGGRAPH}, 2000, pp.~249--254.

\bibitem{ju2002dual}
T.~Ju, F.~Losasso, S.~Schaefer, J.~Warren,
Dual contouring of Hermite data,
\emph{ACM Transactions on Graphics} 21~(3) (2002) 339--346.

\bibitem{museth2013vdb}
K.~Museth,
VDB: high-resolution sparse volumes with dynamic topology,
\emph{ACM Transactions on Graphics} 32~(3) (2013) 27:1--27:22.

\bibitem{bloomenthal1997introduction}
J.~Bloomenthal (Ed.),
\emph{Introduction to Implicit Surfaces},
Morgan Kaufmann, 1997.

\bibitem{pasko1995function}
A.~Pasko, V.~Adzhiev, A.~Sourin, V.~Savchenko,
Function representation in geometric modeling: concepts, implementation and
applications,
\emph{The Visual Computer} 11~(8) (1995) 429--446.
\href{https://doi.org/10.1007/BF02464333}{doi:10.1007/BF02464333}.

\bibitem{wyvill1999extending}
B.~Wyvill, A.~Guy, E.~Galin,
Extending the CSG tree: warping, blending and Boolean operations in an implicit
surface modeling system,
\emph{Computer Graphics Forum} 18~(2) (1999) 149--158.

\bibitem{jones2006distance}
M.~W. Jones, J.~A. B{\ae}rentzen, M.~{\v S}r{\'a}mek,
3D distance fields: a survey of techniques and applications,
\emph{IEEE Transactions on Visualization and Computer Graphics} 12~(4)
(2006) 581--599.

\bibitem{ricci1973constructive}
A.~Ricci,
A constructive geometry for computer graphics,
\emph{The Computer Journal} 16~(2) (1973) 157--160.
\href{https://doi.org/10.1093/comjnl/16.2.157}{doi:10.1093/comjnl/16.2.157}.

\bibitem{shapiro2007semianalytic}
V.~Shapiro,
Semi-analytic geometry with R-functions,
\emph{Acta Numerica} 16 (2007) 239--303.
\href{https://doi.org/10.1017/S096249290631001X}{doi:10.1017/S096249290631001X}.

\bibitem{balint2023operations}
C.~B\'alint, G.~Valasek, L.~Gerg\'o,
Operations on signed distance function estimates,
\emph{Computer-Aided Design and Applications} 20~(6) (2023) 1154--1174.
\href{https://doi.org/10.14733/cadaps.2023.1154-1174}{doi:10.14733/cadaps.2023.1154-1174}.

\bibitem{balint2024cutlocus}
Cs.~B\'alint, R.~B\'an, G.~Valasek,
Computing the cut locus, Voronoi diagram, and signed distance function of polygons,
\emph{Computer Aided Geometric Design} 114 (2024) 102388.
\href{https://doi.org/10.1016/j.cagd.2024.102388}{doi:10.1016/j.cagd.2024.102388}.

\bibitem{marschner2023constructive}
Z.~Marschner, S.~Sell\'an, H.-T.~D. Liu, A.~Jacobson,
Constructive solid geometry on neural signed distance fields,
in: \emph{SIGGRAPH Asia 2023 Conference Papers}, 2023, pp.~1--12.
\href{https://doi.org/10.1145/3610548.3618170}{doi:10.1145/3610548.3618170}.

\bibitem{barbier2025pruning}
W.~Barbier, M.~Sanchez, A.~Paris, E.~Michel, T.~Lambert, T.~Boubekeur,
M.~Paulin, T.~Thonat,
Lipschitz Pruning: hierarchical simplification of primitive-based SDFs,
\emph{Computer Graphics Forum} 44~(2) (2025) e70057.
\href{https://doi.org/10.1111/cgf.70057}{doi:10.1111/cgf.70057}.

\bibitem{chen2026greed}
L.~Chen, S.~Mupparaju, C.~Batty, S.~Sell\'an, O.~Stein,
Greed for the spheres: a signed distance interpolation method,
\emph{arXiv:2605.01919}, 2026.

\bibitem{kohlbrenner2026contouring}
M.~Kohlbrenner, M.~Alexa,
Contouring signed distance fields by approximating gradients,
\emph{Computer Graphics Forum} (2026) e70373.
\href{https://doi.org/10.1111/cgf.70373}{doi:10.1111/cgf.70373}.

\bibitem{kalra1989guaranteed}
D.~Kalra, A.~H. Barr,
Guaranteed ray intersections with implicit surfaces,
\emph{Computer Graphics (SIGGRAPH '89)} 23~(3) (1989) 297--306.

\bibitem{galin2020segment}
E.~Galin, E.~Gu\'erin, A.~Paris, A.~Peytavie,
Segment tracing using local Lipschitz bounds,
\emph{Computer Graphics Forum} 39~(2) (2020) 545--554.
\href{https://doi.org/10.1111/cgf.13951}{doi:10.1111/cgf.13951}.

\bibitem{ban2025generalized}
R.~B\'an, G.~Valasek,
Generalized Lipschitz tracing of implicit surfaces,
\emph{Computer Graphics Forum} 44~(1) (2025) e15202.
\href{https://doi.org/10.1111/cgf.15202}{doi:10.1111/cgf.15202}.

\bibitem{moore1966interval}
R.~E. Moore,
\emph{Interval Analysis},
Prentice-Hall, 1966.

\bibitem{snyder1992interval}
J.~M. Snyder,
Interval analysis for computer graphics,
\emph{Computer Graphics (SIGGRAPH '92)} 26~(2) (1992) 121--130.

\bibitem{duff1992interval}
T.~Duff,
Interval arithmetic and recursive subdivision for implicit functions and
constructive solid geometry,
\emph{Computer Graphics (SIGGRAPH '92)} 26~(2) (1992) 131--138.
\href{https://doi.org/10.1145/142920.134027}{doi:10.1145/142920.134027}.

\bibitem{sharp2022spelunking}
N.~Sharp, A.~Jacobson,
Spelunking the deep: guaranteed queries on general neural implicit surfaces via
range analysis,
\emph{ACM Transactions on Graphics} 41~(4) (2022).

\bibitem{yang2023steik}
H.~Yang, Y.~Sun, G.~Sundaramoorthi, A.~Yezzi,
StEik: stabilizing the optimization of neural signed distance functions and
finer shape representation,
in: \emph{Advances in Neural Information Processing Systems}, 2023.

\bibitem{wang2025hotspot}
Z.~Wang, C.~Wang, T.~Yoshino, S.~Tao, Z.~Fu, T.-M.~Li,
HotSpot: signed distance function optimization with an asymptotically
sufficient condition,
in: \emph{Proc. IEEE/CVF Conference on Computer Vision and Pattern Recognition
(CVPR)}, 2025, pp.~1276--1286.

\bibitem{coiffier2024lipschitz}
G.~Coiffier, L.~Bethune,
1-Lipschitz neural distance fields,
\emph{Computer Graphics Forum} 43~(5) (2024) e15128.
\href{https://doi.org/10.1111/cgf.15128}{doi:10.1111/cgf.15128}.

\bibitem{federer1959curvature}
H.~Federer,
Curvature measures,
\emph{Transactions of the American Mathematical Society} 93~(3) (1959)
418--491.

\end{thebibliography}
\end{document}